\documentclass[a4paper,12pt]{article}
\usepackage{test,subcaption,caption,tikz}

\usepackage{hyperref}
\hypersetup{colorlinks,citecolor=blue,urlcolor=magenta}

\usepackage[style=authoryear-comp,%numeric-comp,
sorting=nyt,
dashed=false, % don't use dash for repeated author
maxcitenames=2, % maximum names to cite in text
maxbibnames=99, % maximum names to list in reference
uniquelist=false,
uniquename=false,
giveninits=true, % first name initials
natbib, % use natbib commands
date=year % year only
]{biblatex}

\AtBeginRefsection{\GenRefcontextData{sorting=ynt}}
\AtEveryCite{\localrefcontext[sorting=ynt]}

\addbibresource{localbib.bib}

\DeclareFieldFormat{pages}{#1} % suppress pp.
\renewbibmacro{in:}{\ifentrytype{article}{}{\printtext{\bibstring{in}\intitlepunct}}} % delete In:

\usepackage[inline,shortlabels]{enumitem}
\setlist[enumerate,1]{label=(\roman*)}

\usepackage[margin = 1.25in]{geometry}
\usepackage{setspace}
\onehalfspacing

%opening
\title{Bubble Economics\thanks{We thank Gadi Barlevy, Narayana Kocherlakota, Jos\'e Scheinkman, and Joseph Stiglitz for discussions, comments, and continued encouragement; Gaetano Bloise for providing detailed comments that significantly improved our understandings; and Jessica Li and Max Yang for excellent research assistance.}} 
\author{Tomohiro Hirano\thanks{Department of Economics, Royal Holloway, University of London and Research Associate at the Center for Macroeconomics at the London School of Economics and the Canon Institute for Global Studies. Email: \href{mailto:tomohiro.hirano@rhul.ac.uk}{tomohih@gmail.com}.} \and Alexis Akira Toda\thanks{Department of Economics, University of California San Diego. Email: \href{mailto:atoda@ucsd.edu}{atoda@ucsd.edu}.}}

\date{\today\\
{\normalsize Forthcoming at \emph{Journal of Mathematical Economics} 50th Anniversary Issue}}

\numberwithin{equation}{section}
\numberwithin{lem}{section}
%\numberwithin{thm}{section}

\begin{document}

\maketitle

\begin{abstract}

This article provides a self-contained overview of the theory of rational asset price bubbles. We cover topics from basic definitions, properties, and classical results to frontier research, with an emphasis on bubbles attached to real assets such as stocks, housing, and land. The main message is that bubbles attached to real assets are fundamentally nonstationary phenomena related to unbalanced growth. We present a bare-bones model and draw three new insights:
\begin{enumerate*}
    \item the emergence of asset price bubbles is a necessity, instead of a possibility;
    \item asset pricing implications are markedly different between balanced growth of stationary nature and unbalanced growth of nonstationary nature; and   
    \item asset price bubbles occur within larger historical trends involving shifts in industrial structure driven by technological innovation, including the transition from the Malthusian economy to the modern economy.
\end{enumerate*}

\medskip

\textbf{Keywords:} bubbles attached to real assets, necessity versus possibility, nonstationarity, technological progress, unbalanced growth.

\medskip

\textbf{JEL codes:} D53, E44, G12, O16.
\end{abstract}

\section{Introduction}

% opening paragraph
An asset price bubble is, loosely speaking, a situation in which the asset price is too high to be justified by fundamentals. History abounds with bubbly episodes. \citet[Appendix B]{Kindleberger2000} documents 38 episodes in the 1618--1998 period. \citet*{JordaSchularickTaylor2015} study bubbles in housing and equity markets in 17 countries over the past 140 years. Famous examples are the Dutch tulip mania of the 1630s, the South Sea Bubble of 1720 in England, the Japanese real estate and stock market bubble of the 1980s, and the U.S. dot-com bubble of the late 1990s and the housing bubble of the early 2000s, among others.

It is easy to dismiss these bubbly episodes as ``irrational exuberance'' and exclude them from formal economic analysis, at least those based on rational equilibrium models. However, we could also easily imagine assets that seem expensive relative to fundamentals for a prolonged period, such as real estates in certain cities including London, San Francisco, Sydney, and Vancouver. How should we think of them from an economic theoretical point of view?

The purpose of this article is to introduce the theory of rational asset price bubbles to the general audience. As \citet{Miao2014} acknowledges, ``this topic is typically not taught in macroeconomics or microeconomics'' and ``there are many misunderstandings of some conceptual and theoretical issues''. The situation remains the same after a decade. To deal with these issues, we provide an overview of the theory of rational asset price bubbles in a mostly self-contained way so that non-experts including PhD students can follow the argument. We cover topics from basic definitions, properties, and classical results to frontier research. Since \citet{Miao2014} discusses the theory of rational bubbles up to 2014, which almost exclusively focuses on pure bubbles (assets that pay no dividends such as money), we proceed with more emphasis on subsequent developments, especially bubbles attached to real assets such as stocks, housing, and land.  

The main message is that bubbles attached to real assets are fundamentally nonstationary phenomena related to unbalanced growth. This implies that to understand the essence of asset price bubbles, we need to depart from stationary models with a steady state to nonstationary models without one. We present a bare-bones model with bubbles attached to productive land (a dividend-paying asset) and draw three new insights. 
\begin{enumerate*}
    \item The emergence of asset price bubbles is a necessity, instead of a possibility.
    \item Asset pricing implications are markedly different between balanced growth of stationary nature and unbalanced growth of nonstationary nature.  
    \item Asset price bubbles occur in larger historical trends involving shifts in industrial structure driven by technological innovation, including the transition from the Malthusian economy to the modern economy.
\end{enumerate*}
These insights are wildly different from common results in pure bubble models (bubbles attached to an intrinsically useless asset). At the same time, the bare-bones model we present includes a pure bubble model as a special case. In this sense, it can also provide firmer footing for pure bubble models.

In our opinion, the theory of rational asset price bubbles attached to real assets has the potential to fundamentally change the conventional thinking about asset bubbles. Moreover, it is still largely underdeveloped and hence provides a vast fertile ground for applications. We hope to expand the community of researchers working on this exciting topic.

There are already several surveys on asset price bubbles. \citet{BrunnermeierOehmke2013} discuss a variety of models including heterogeneous beliefs and asymmetric information. In contrast, our focus is rational bubble models in which the asset price $P_t$ exceeds the fundamental value $V_t$ as an equilibrium outcome, even if common knowledge about assets is assumed (fundamental value is unambiguously defined).\footnote{It is often claimed that asset bubbles are associated with human irrationality. Of course, it is an important factor and the rational bubble literature does not deny that. Rather, the literature stresses that even if agents hold rational expectations and common knowledge about assets, bubbles can still arise as an equilibrium outcome. This implies that if we assume human's irrationality, asset price bubbles are more likely to occur.} \citet{Barlevy2018} discusses policy implications and \citet{MartinVentura2018} discuss the quantitative literature and thus are complementary.\footnote{There is also a large literature that studies money as a medium of exchange using a search-theoretic approach, including seminal contributions by \citet{Jevons1875}, \citet{KiyotakiWright1989}, \citet{Iwai1996}, and \citet{LagosWright2005}. For developments in monetary-search models, see \citet{WilliamsonWright2010}. Our review focuses on asset price bubbles in a competitive economy with agents holding rational expectations following the definition in \S\ref{subsec:prelim_defn}, with an emphasis on bubbles attached to real assets. Obviously, the two approaches are complementary.} 

\subsection{Japan's bubble economy in 1980s}

Partly to put bubbles into a historical context, and partly for entertainment, this section discusses the Japanese real estate bubble in the 1980s, which is one of the most spectacular bubbly episodes in human history.\footnote{For more details, see \citet{Noguchi1994}, \citet*{OkinaShirakawaShiratsuka2001}, and \citet[in Japanese]{Ishii2011}.}

Japan experienced a rapid postwar economic growth. The Japanese economy coped with the oil shocks in the 1970s through \emph{sh\=o-ene}, or efficiency improvement in energy consumption. With a rising standard of living, the Japanese society was filled with optimism. One popular catchphrase was ``Japan as number one''  \citep{Vogel1979}. Several factors contributed to the emergence of the bubble economy. First, the 1985 report by the National Land Agency titled ``Capital Remodeling Plan'' predicted Tokyo was destined to become a global financial hub.\footnote{Specifically, the report predicted that the demand for office space in Tokyo would increase from 3,700ha in 1985 to 8,700ha (equivalent of 250 skyscrapers) by year 2000 to house corporate headquarters and international financial services.} 
Second, in response to a brief recession caused by the rapid appreciation of the yen and a contraction in the manufacturing sector following the Plaza Accord on September 22, 1985, the Bank of Japan cut the official discount rate (from 5.0\% in January 1986 to 2.5\% in February 1987) to stimulate the economy. General optimism and a sustained low interest rate environment fueled land speculation.

According to \citet[Table 1]{Noguchi1990}, land price appreciation accelerated around 1986. As of 1987, the price-rent ratio of the Marunouchi business district in Tokyo was 20 times that of the inner city of London. The term ``bubble'' appeared for the first time in an article by Noguchi in the November 26, 1987 issue of \emph{T\=oy\=o Keizai} titled ``Land Price Inflated with Bubbles''. The easy money also made consumers extravagant: people flocked to fancy restaurants, discos, and ski resorts, drank expensive French wines like Roman\'ee-Conti and Ch\^ateau Latour, and bid up 10,000 yen bills along streets to secure taxi rides.\footnote{The 2007 Japanese science fiction comedy film ``Bubble Fiction: Boom or Bust'' vividly illustrates this situation.} 
In large cities, many office buildings were constructed, some of which were dubbed ``The Tower of Bubble'' due to the postmodern architectural style that was popular at the time (Figure \ref{fig:tower}).\footnote{One of them is the Tokyo Metropolitan Government Building constructed in 1990. Another is ``Joule A'' constructed in 1990 by Azabu Building. Its president, Kitaro Watanabe, the 6th wealthiest individual in the world in 1990 according to Forbes, borrowed 700 billion yen and possessed 165 office buildings in Tokyo and 6 luxury hotels in Hawaii.}

Some cautioned against land speculation. According to an interview article in the October 23, 2000 issue of \emph{Nikkei Business}, Taro Kaneko, former Ministry of Finance bureaucrat and then the president of Marusan Security, wrote a letter in 1988 addressed to employees stating ``The total land value in Japan is estimated to be 4 times of U.S. The land area is 1/25, so the unit price is really 100 times. The land price of the Imperial Palace is about the same as California. Even if the Japanese economy is booming, we cannot expect such an abnormal disparity to be sustainable. Moreover, the Japanese population will decline. Therefore, you should refrain from purchasing housing for the time being''.\footnote{The sources of urban legends that compare Japanese land to California or U.S. are unclear. The earliest in print we found is on p.~103 of the book ``\emph{Chiky\=u Jidai no Shin Shiten}'' (A New Perspective in a Global Era) by business consultant Kenichi Ohmae published in December 1988, who writes ``the land price of the Imperial Palace equals that of the entire California''. In English, \citet[p.~244]{Frankel1991} (whose manuscript circulated in October 1989) notes ``A favorite ``factoid,'' which is apparently true, is that the grounds of the Imperial Palace [\ldots] is worth more than all the land in the State of California'', with evidence attributed to \citet{Boone1989,Boone1990}. \citet[Ch.~1]{Boone1990}, which is a revision of \citet{Boone1989} and provides detailed empirical analysis, states ``In the spring of 1990, the value of Japanese land was estimated to equal fifteen trillion dollars. This [\ldots] is over three times the value of all the land in the United States''.}

By the end of the 1980s, average households could no longer afford land, which became a big social issue known as the ``land problem''. Following the official discount rate hike from 2.5\% to 6.0\% in May 1989--August 1990 by the Bank of Japan and the introduction of the Real Estate Loan Total Quantity Restriction by the Ministry of Finance on March 27, 1990, easy money dried out and the bubble collapsed.

\section{Definition and characterization of bubbles}\label{sec:prelim}

We start the discussion from the definition of asset price bubbles and their characterization, largely based on \citet[\S 2]{HiranoTodaNecessity}.

\subsection{Formal definition}\label{subsec:prelim_defn}

We consider an infinite-horizon, deterministic economy with a homogeneous good and time indexed by $t=0,1,\dotsc$. Consider an asset with infinite maturity that pays dividend $D_t\ge 0$ and trades at ex-dividend price $P_t$, both in units of the time-$t$ good. In the background, we assume the presence of rational, perfectly competitive investors. Free disposal of the asset implies $P_t\ge 0$.\footnote{\label{fn:P>0}If $P_t<0$, by purchasing one additional share of the asset at time $t$ and immediately disposing of it, an investor can increase consumption at time $t$ by $-P_t>0$ with no cost, which violates individual optimality.} Let $q_t>0$ be the Arrow-Debreu price, \ie, the date-0 price of the consumption good delivered at time $t$, with the normalization $q_0=1$. The absence of arbitrage implies
\begin{equation}
    q_tP_t = q_{t+1}(P_{t+1}+D_{t+1}). \label{eq:noarbitrage}
\end{equation}
Iterating the no-arbitrage condition \eqref{eq:noarbitrage} forward and using $q_0=1$, we obtain
\begin{equation}
    P_0=\sum_{t=1}^T q_tD_t+q_TP_T. \label{eq:P_iter}
\end{equation}
Because $q_t>0$ and $D_t\ge 0$, the sequence $\set{\sum_{t=1}^T q_tD_t}$ is increasing in $T$. Furthermore, because \eqref{eq:P_iter} holds and $P_T\ge 0$, the sequence is bounded above by $P_0$. Therefore the sequence converges and the infinite sum of the present value of dividends
\begin{equation*}
    V_0\coloneqq \sum_{t=1}^\infty q_tD_t
\end{equation*}
exists, which is called the \emph{fundamental value} of the asset. Letting $T\to\infty$ in \eqref{eq:P_iter}, we obtain
\begin{equation}
    P_0=\sum_{t=1}^\infty q_tD_t+\lim_{T\to\infty}q_TP_T=V_0+\lim_{T\to\infty}q_TP_T. \label{eq:P0}
\end{equation}
When the last term in \eqref{eq:P0} equals zero, or
\begin{equation}
    \lim_{T\to\infty}q_TP_T = 0, \label{eq:TVC}
\end{equation}
we say that the \emph{transversality condition} for asset pricing holds,\footnote{The term ``transversality condition'' has two meanings: one is the transversality condition \eqref{eq:TVC} for asset pricing and the other is that for optimality in infinite-horizon optimization problems. One should clearly distinguish the two but the meaning should be clear from the context.} in which case \eqref{eq:P0} implies that $P_0=V_0$ and the asset price equals its fundamental value. If $\lim_{T\to\infty}q_TP_T>0$, then $P_0>V_0$, and we say that the asset contains a \emph{bubble}. In other words, an asset price bubble is a situation in which the asset price exceeds its fundamental value defined by the present value of dividends.

\subsection{Remarks}

Some remarks are in order regarding the definition of bubbles. 

First, our definition of asset price bubbles simply follows that in the literature; see for instance \citet[Footnote 8]{Tirole1985}, \citet[Ch.~5]{BlanchardFischer1989}, and \citet[\S 13.6]{Miao2020}. Note that free disposal forces $P_t\ge 0$, so by \eqref{eq:P0} negative bubbles ($P_t<V_t$) are impossible. The term ``bubble'' often appears in the academic literature as well as the popular press. However, the definition is often different and most papers are not about bubbles in the sense of the definition in \S\ref{subsec:prelim_defn} because authors rarely verify the violation of the transversality condition \eqref{eq:TVC}.\footnote{Concerning the discount rate, we obviously need to use it in a consistent manner with the individual optimization problem and the no-arbitrage condition.} In this article, we provide an overview of rational asset price bubbles following the definition of \S\ref{subsec:prelim_defn}.

Second, note that in deterministic economies, for all $t$ we have
\begin{equation}
    P_t=\underbrace{\frac{1}{q_t}\sum_{s=1}^\infty q_{t+s}D_{t+s}}_\text{fundamental value $V_t$}+\underbrace{\frac{1}{q_t}\lim_{T\to\infty} q_TP_T}_\text{bubble component}. \label{eq:Pt}
\end{equation}
Therefore either $P_t=V_t$ for all $t$ or $P_t>V_t$ for all $t$, so the economy is permanently in either the bubbly or the fundamental regime. This is different from the popular conception that bubbles are short-run asset appreciations followed by crashes. According to the mathematical definition and consistent expectations (rational expectations) in deterministic economies, bubbles are a long run phenomenon, namely a \emph{permanent overvaluation of assets}. Indeed, as we explain in \S\ref{subsec:found_necessary} and \S\ref{sec:bare-bones}, it is natural to assume that asset price bubbles occur over a long-term historical process that involves a shift in industrial structure.

Third, and related to the second point, in deterministic economies bubbles are never expected to collapse. However, this should not be taken literally. What the theory tells us is that given the expectations of the agents, permanent bubbles may emerge. Of course, the equilibrium will change if agents revise their expectations. For instance, a Roman living in the 2nd century might have expected the Empire to prosper forever. Another Roman living in the 4th century might have been more pessimistic. The same applies to catchphrases like ``Japan as number one'' in the 1980s or the ``New Economy'' in the 1990s. However, if expectations change, a bubble may collapse. From an ex-post perspective, this bubble may appear to be a short-run phenomenon. We need to separate ex-ante and ex-post. In what follows, we abstract from expectations formation and study asset pricing implications given the expectations.

\subsection{Bubble Characterization Lemma}

In general, checking the transversality condition \eqref{eq:TVC} directly could be difficult because it involves $q_T$, which is generally not easy to evaluate. The following lemma provides an equivalent characterization and facilitates checking the presence or absence of bubbles.\footnote{Theorem 5 of \citet{MontrucchioPrivileggi2001} presents a similar condition under uncertainty but is limited to the ``only if'' part.}

\begin{lem}[Bubble Characterization, \citealp{HiranoTodaNecessity}]\label{lem:bubble}
If $P_t>0$ for all $t$, the asset price exhibits a bubble if and only if $\sum_{t=1}^\infty D_t/P_t<\infty$.
\end{lem}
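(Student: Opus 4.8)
The plan is to convert the transversality condition \eqref{eq:TVC} into a statement about an infinite product and then invoke the elementary equivalence between convergence of an infinite product and summability of its increments. First I would use the no-arbitrage condition \eqref{eq:noarbitrage}, which since $P_t>0$ gives
\[
\frac{q_{t+1}P_{t+1}}{q_tP_t}=\frac{q_{t+1}P_{t+1}}{q_{t+1}(P_{t+1}+D_{t+1})}=\frac{P_{t+1}}{P_{t+1}+D_{t+1}}=\frac{1}{1+D_{t+1}/P_{t+1}}\in(0,1].
\]
Telescoping this ratio from $t=0$ to $t=T-1$ and using $q_0=1$ yields the key identity
\[
q_TP_T=P_0\prod_{t=1}^T\frac{1}{1+D_t/P_t}.
\]
Since every factor lies in $(0,1]$, the partial products are nonincreasing in $T$ (consistent with $q_TP_T=P_0-\sum_{t=1}^Tq_tD_t$ from \eqref{eq:P_iter}), so $\lim_{T\to\infty}q_TP_T$ exists in $[0,P_0]$. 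By the definition in \S\ref{subsec:prelim_defn}, the asset exhibits a bubble if and only if this limit is strictly positive, which happens if and only if $\prod_{t=1}^\infty(1+D_t/P_t)<\infty$.

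It then remains to show that, writing $a_t\coloneqq D_t/P_t\ge 0$, one has $\prod_{t=1}^\infty(1+a_t)<\infty$ if and only if $\sum_{t=1}^\infty a_t<\infty$. This is standard: from $1+x\le e^x$ we get $\prod_{t=1}^T(1+a_t)\le\exp\bigl(\sum_{t=1}^T a_t\bigr)$, so $\sum a_t<\infty$ implies the product is bounded, and since its partial products are nondecreasing (each factor is $\ge 1$) it converges to a finite limit; conversely, expanding the product gives $\prod_{t=1}^T(1+a_t)\ge 1+\sum_{t=1}^T a_t$, so finiteness of the product forces $\sum a_t<\infty$. Combining this with the previous paragraph gives the claim.

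The argument is essentially routine once the product representation is in hand; the only step that requires care is checking that the telescoping and the passage to the limit are legitimate, which is precisely where the hypothesis $P_t>0$ for all $t$ enters (it is what makes $D_t/P_t$ well defined and division by $q_tP_t$ admissible). I do not anticipate any substantive obstacle beyond this bookkeeping; in particular, unlike the stochastic case (cf.\ the cited result of \citealp{MontrucchioPrivileggi2001}), the deterministic setting makes both directions equally transparent.
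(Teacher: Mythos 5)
Your proof is correct and is essentially the intended argument: the paper defers the proof of Lemma \ref{lem:bubble} to \citet{HiranoTodaNecessity}, and the argument there rests on exactly your product representation $q_TP_T=P_0\prod_{t=1}^T(1+D_t/P_t)^{-1}$ obtained by telescoping the no-arbitrage condition, combined with the elementary equivalence $\prod_{t}(1+a_t)<\infty \iff \sum_t a_t<\infty$ for $a_t\ge 0$. No gaps.
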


One important implication of the Bubble Characterization Lemma \ref{lem:bubble} is that bubbles are fundamentally nonstationary phenomena, where prices grow faster than dividends. To see why, note that there is an asset price bubble if and only if the infinite sum of dividend yields $D_t/P_t$ is finite. Because $\sum_{t=1}^\infty 1/t=\infty$ but $\sum_{t=1}^\infty 1/t^\alpha<\infty$ for any $\alpha>1$, roughly speaking, there is an asset price bubble if the price-dividend ratio $P_t/D_t$ grows faster than linearly. In particular, a bubble can never happen if the price-dividend ratio is stationary. This point together with the fact that economists are trained to study stationary models may explain why the theory of asset price bubbles attached to dividend-paying assets is underdeveloped.

\section{Theoretical foundations of bubbles}\label{sec:found}

\subsection{Bubbles in overlapping generations models}\label{subsec:found_OLG}

Bubbles are well known to arise in overlapping generations (OLG) models. Here we present a simple analysis based on \citet{Samuelson1958}. Time is denoted by $t=0,1,\dotsc$. At each time $t$, a new agent is born, who lives for two periods and has the Cobb-Douglas utility function
\begin{equation}
    U(y_t,z_{t+1})=(1-\beta)\log y_t+\beta \log z_{t+1}, \label{eq:CD}
\end{equation}
where $(y_t,z_{t+1})$ is consumption when young and old. At $t=0$, there is also an old agent who only cares about their consumption $z_0$. For any $t$, the endowments are $a>0$ when young and $b>0$ when old. There is also an intrinsically useless asset (like fiat money), which is in unit supply, perfectly durable, and initially owned by the old at $t=0$.

The budget constraints of an agent born at time $t$ are therefore
\begin{subequations}\label{eq:budget_OLG}
\begin{align}
    &\text{Young:} & y_t+P_tx_t &= a, \label{eq:budget_OLG_young}\\
    &\text{Old:} & z_{t+1}&=b+P_{t+1}x_t, \label{eq:budget_OLG_old}
\end{align}
\end{subequations}
where $P_t$ is the price of the asset and $x_t$ is asset holdings. Obviously, because the initial old exit the economy and are endowed with one share of the asset, the budget constraint is
\begin{equation*}
    z_0=b+P_0.
\end{equation*}
As usual, a competitive equilibrium is defined by utility maximization and market clearing.

Let us find all (deterministic) equilibria of this economy. Because the asset pays no dividends, its fundamental value is $V_t=0$. By the remark after \eqref{eq:Pt}, we have either $P_t=0$ for all $t$ or $P_t>0$ for all $t$. If $P_t=0$ for all $t$, the budget constraints imply $(y_t,z_{t+1})=(a,b)$ for all $t$, which is clearly an equilibrium, called \emph{fundamental equilibrium}.

Next, suppose $P_t>0$ for all $t$, which is called a \emph{bubbly equilibrium}. Eliminating $x_t$ from the budget constraints \eqref{eq:budget_OLG}, we obtain
\begin{equation*}
    y_t+\frac{P_t}{P_{t+1}}z_{t+1}=a+\frac{P_t}{P_{t+1}}b.
\end{equation*}
Using the familiar Cobb-Douglas formula, the demand of the young is
\begin{equation}
    y_t=(1-\beta)\left(a+\frac{P_t}{P_{t+1}}b\right). \label{eq:yt_OLG}
\end{equation}
Using the budget constraint of the young \eqref{eq:budget_OLG_young} and the market clearing condition $x_t=1$, the asset price satisfies
\begin{align}
    & P_t=P_tx_t=a-y_t=a-(1-\beta)\left(a+\frac{P_t}{P_{t+1}}b\right) \notag \\
    \iff & \frac{1}{P_{t+1}}=\frac{\beta a}{(1-\beta)b}\frac{1}{P_t}-\frac{1}{(1-\beta)b}, \label{eq:diff_OLG}
\end{align}
which is a first-order linear difference equation in $1/P_t$. If $\beta a=(1-\beta)b$, then \eqref{eq:diff_OLG} implies
\begin{equation*}
    \frac{1}{P_t}=\frac{1}{P_0}-\frac{t}{(1-\beta)b}\to -\infty,
\end{equation*}
which contradicts $P_t>0$. Hence assume $\beta a\neq (1-\beta)b$. Then the general solution to \eqref{eq:diff_OLG} is
\begin{equation*}
\frac{1}{P_t}=\frac{1}{\beta a-(1-\beta)b}+\left(\frac{\beta a}{(1-\beta)b}\right)^t\left(\frac{1}{P_0}-\frac{1}{\beta a-(1-\beta)b}\right).
\end{equation*}
It is easy to show that $1/P_t>0$ for all $t$ if and only if $\beta a>(1-\beta)b$ and $P_0\le \beta a-(1-\beta)b$. Therefore we obtain the following proposition.

\begin{prop}\label{prop:samuelson}
In the OLG model, the following statements are true.
\begin{enumerate}
    \item\label{item:samuelson_f} If $\beta a\le (1-\beta)b$, the unique equilibrium is fundamental.
    \item\label{item:samuelson_b} If $\beta a>(1-\beta)b$, there exists a unique fundamental equilibrium as well as a continuum of bubbly equilibria parameterized by $0<P_0\le \beta a-(1-\beta)b$. The bubbly equilibrium is stationary if $P_0=\beta a-(1-\beta)b$; otherwise, the equilibrium is asymptotically bubbleless in the sense that $P_t\to 0$ as $t\to\infty$.
\end{enumerate}
\end{prop}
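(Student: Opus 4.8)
The plan is to assemble the proposition from the computations already carried out above, adding the verification that the candidate price paths are genuine equilibria together with the asymptotic statements. First I would recall the dichotomy implied by \eqref{eq:Pt}: since the asset pays no dividends, $V_t=0$, so in any equilibrium either $P_t=0$ for all $t$ (fundamental) or $P_t>0$ for all $t$ (bubbly). The fundamental path $P_t\equiv 0$ is always an equilibrium --- with a worthless asset the young optimally consume their endowment $a$, the old consume $b$, and any asset holding (in particular $x_t=1$) is optimal --- and it is plainly the only equilibrium with $P\equiv 0$. Hence part \ref{item:samuelson_f} reduces to showing that no bubbly equilibrium exists when $\beta a\le(1-\beta)b$, and part \ref{item:samuelson_b} to characterizing the bubbly equilibria when $\beta a>(1-\beta)b$.

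Second, for bubbly equilibria I would use the reduction already in place: market clearing together with the Cobb--Douglas demand \eqref{eq:yt_OLG} forces $1/P_t$ to satisfy the linear recursion \eqref{eq:diff_OLG}, whose general solution is displayed above. Conversely, given any positive sequence $\set{P_t}$ solving \eqref{eq:diff_OLG}, the induced allocation $y_t=a-P_t=(1-\beta)(a+(P_t/P_{t+1})b)>0$ and $z_{t+1}=b+P_{t+1}>0$ is feasible, satisfies the agents' first-order conditions by construction, and clears the asset market, hence is an equilibrium. So bubbly equilibria correspond exactly to solutions of \eqref{eq:diff_OLG} with $P_t>0$ for all $t$, indexed by the free initial value $P_0>0$ (which the initial old's sale pins down in equilibrium).

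Third, I would run the sign analysis of the displayed solution across cases. If $\beta a=(1-\beta)b$, the text already shows $1/P_t\to-\infty$, which is impossible. If $\beta a<(1-\beta)b$, then $\beta a/((1-\beta)b)<1$, so the geometric term vanishes and $1/P_t\to 1/(\beta a-(1-\beta)b)<0$, again eventually negative --- no bubbly equilibrium, proving \ref{item:samuelson_f}. If $\beta a>(1-\beta)b$, set $c\coloneqq\beta a-(1-\beta)b>0$ and note $\beta a/((1-\beta)b)>1$; then $1/P_t$ is monotone in $t$ with the sign of $1/P_0-1/c$. When $P_0>c$ this term is negative and $1/P_t\to-\infty$, excluded; when $P_0=c$ we get $1/P_t\equiv 1/c$, i.e.\ the stationary equilibrium $P_t\equiv c$; when $0<P_0<c$, $1/P_t$ is increasing from the positive value $1/P_0$, hence stays positive for all $t$, while $1/P_t\to+\infty$, i.e.\ $P_t\to 0$ --- the asymptotically bubbleless equilibria. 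This delivers the claimed continuum $0<P_0\le c$ and completes \ref{item:samuelson_b}.

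I do not expect a serious obstacle. The only points requiring care are (a) checking that a positive solution of the recursion genuinely is an equilibrium, i.e.\ that feasibility ($y_t,z_{t+1}>0$) and optimality hold, which is immediate from positivity of $P_t$ and the Cobb--Douglas structure; and (b) keeping the direction of monotonicity and the sign of $1/P_0-1/c$ straight in the case $\beta a>(1-\beta)b$, since both the ``explosive'' subcase ($P_t\to 0$) and the forbidden subcase ($P_0>c$) involve $1/P_t$ moving monotonically, but in opposite directions.
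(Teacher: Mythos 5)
Your proposal is correct and follows essentially the same route as the paper: reduce bubbly equilibria to positive solutions of the linear recursion \eqref{eq:diff_OLG}, write the general solution, and carry out the sign analysis that the paper leaves as ``easy to show.'' The extra verifications you flag (that a positive solution of the recursion is a genuine equilibrium, and that feasibility follows from the Cobb--Douglas structure) are exactly the right details to fill in and are handled correctly.
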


An intrinsically useless asset that pays no dividends is often called a \emph{pure bubble}. In pure bubble models, there is always a fundamental equilibrium in which the asset has no value. In addition, there often exist a continuum of bubbly equilibria, as in Proposition \ref{prop:samuelson}. Thus the model suffers from equilibrium indeterminacy.

\subsection{Bubbles in infinite-horizon models}\label{subsec:found_infhor}

OLG models are arguably stylized. The first example of a bubbly equilibrium with infinitely-lived agents is due to \citet{Bewley1980}. In this model, there are two agents with utility $\sum_{t=0}^\infty \beta^t u(c_t)$, where $\beta>0$ is the discount factor and $u$ is the period utility function. For simplicity, assume that $u$ exhibits constant relative risk aversion $\gamma>0$, so
\begin{equation*}
    u(c)=\begin{cases*}
        \frac{c^{1-\gamma}}{1-\gamma} & if $\gamma\neq 1$,\\
        \log c & if $\gamma=1$.
    \end{cases*}
\end{equation*}
The endowment streams of the two agents are given by
\begin{align*}
    &\text{Agent 1:} &&(a,bG,aG^2,bG^3,\dotsc),\\
    &\text{Agent 2:} &&(b,aG,bG^2,aG^3,\dotsc),
\end{align*}
where $a>b$ and $G>0$. Thus, the aggregate endowment $(a+b)G^t$ grows at a constant rate $G>0$, but the income ratio between the two agents alternates between $a/b$ and $b/a$ every period. At time $t$, call the agent with endowment $aG^t$ ($bG^t$) ``rich (poor)''. Suppose that there is a pure bubble asset in unit supply, which is initially held by the poor agent. There is a shortsales constraint, and agents can take a long position in the asset but not a short position.

Let us construct a bubbly equilibrium in this model. Conjecture that the asset price is given by $P_t=pG^t$ for some constant $p>0$, and that every period the poor agent sells the entire share of the asset to the rich agent. Then the consumption allocation of the rich and poor agents is
\begin{equation*}
    (c_t^r,c_t^p)=((a-p)G^t,(b+p)G^t).
\end{equation*}
The first-order condition of the rich agent, which must hold with equality because the agent is buying the asset, is
\begin{align}
    u'(c_t^r)P_t=\beta u'(c_{t+1}^p)P_{t+1}&\iff ((a-p)G^t)^{-\gamma}pG^t=\beta ((b+p)G^{t+1})^{-\gamma}pG^{t+1} \notag \\
    &\iff p=\frac{(\beta G^{1-\gamma})^{1/\gamma}a-b}{1+(\beta G^{1-\gamma})^{1/\gamma}}. \label{eq:foc_rich}
\end{align}
For this $p$ to be positive, we need $b<(\beta G^{1-\gamma})^{1/\gamma}a$. Because the bubble asset cannot be shorted, the first-order condition of the poor agent becomes the inequality
\begin{align}
    u'(c_t^p)P_t\ge \beta u'(c_{t+1}^r)P_{t+1} &\iff ((b+p)G^t)^{-\gamma}pG^t\ge \beta ((a-p)G^{t+1})^{-\gamma}pG^{t+1} \notag \\
    &\iff \beta G^{1-\gamma}\le 1. \label{eq:foc_poor}
\end{align}
Since $c_t\sim G^t$ and $P_t\sim G^t$ as $t\to\infty$, we obtain the transversality condition for optimality $\beta^t u'(c_t)P_t\sim (\beta G^{1-\gamma})^t\to 0$ if and only if $\beta G^{1-\gamma}<1$, in which case \eqref{eq:foc_poor} holds. Therefore we obtain the following proposition.

\begin{prop}\label{prop:bewley}
If $\beta G^{1-\gamma}<1$ and $b<(\beta G^{1-\gamma})^{1/\gamma}a$, the economy has a bubbly equilibrium with asset price $P_t=pG^t$ for $p$ in \eqref{eq:foc_rich}.
\end{prop}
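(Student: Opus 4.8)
The plan is to display a candidate equilibrium explicitly and then verify the two requirements in the definition of competitive equilibrium, market clearing and individual optimality. The candidate consists of the price path $P_t=pG^t$ with $p$ as in \eqref{eq:foc_rich}; the portfolio rule in which, at every date, the agent who is currently rich buys the single outstanding share from the agent who is currently poor (so each agent's holding sequence alternates between $1$ and $0$); and the induced allocation $(c_t^r,c_t^p)=((a-p)G^t,(b+p)G^t)$. The algebra in the conjecture preceding the proposition has already reduced the agents' first-order conditions to \eqref{eq:foc_rich} and \eqref{eq:foc_poor}, so what is left is to check feasibility and to turn the first-order conditions into a genuine optimality statement.

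First I would record feasibility. Writing $\theta\coloneqq(\beta G^{1-\gamma})^{1/\gamma}$, the hypothesis $b<\theta a$ gives $p>0$, while a one-line computation yields $a-p=(a+b)/(1+\theta)>0$ and $b+p=\theta(a+b)/(1+\theta)>0$; hence both agents consume strictly positive amounts at every date, and lifetime utility on the candidate path is finite because $\beta^tu(c_t)$ is of order $(\beta G^{1-\gamma})^t$. The goods market clears since $c_t^r+c_t^p=(a+b)G^t$, the aggregate endowment at date $t$; the asset market clears since exactly one agent holds the unit share after trade; and the portfolio rule respects the flow budget constraints, because the currently poor agent enters the period holding the share it bought one period earlier (when rich), sells it, and consumes its endowment plus $P_t$, whereas the currently rich agent enters with no share and buys it.

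Next I would verify individual optimality taking $\set{P_t}$ as given. Each agent solves a concave problem: maximize $\sum_t\beta^tu(c_t)$ subject to the flow budget constraints and the short-sales constraint $x_t\ge 0$. On the candidate path an agent picks $x_t=1$ in the periods in which it is rich (interior asset choice) and $x_t=0$ in the periods in which it is poor (binding constraint), so the Kuhn--Tucker/complementary-slackness conditions reduce to: the Euler relation $u'(c_t)P_t\ge\beta u'(c_{t+1})P_{t+1}$ holds with equality when the agent is rich and holds weakly when the agent is poor. The equality is exactly \eqref{eq:foc_rich}, which is how $p$ was chosen. For the weak inequality, \eqref{eq:foc_poor} is equivalent to $(a-p)^\gamma/(b+p)^\gamma\ge\beta G^{1-\gamma}$; substituting $(a-p)/(b+p)=1/\theta$ reduces this to $\theta^{-\gamma}\ge\beta G^{1-\gamma}$, \ie, to $\beta G^{1-\gamma}\le 1$, which holds strictly by hypothesis. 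Finally the transversality condition for optimality holds: since $c_t\sim G^t$, $P_t\sim G^t$, and $x_t\in\set{0,1}$, we have $\beta^tu'(c_t)P_tx_t=O((\beta G^{1-\gamma})^t)\to 0$. Invoking the standard sufficiency theorem for concave dynamic problems — a feasible plan satisfying the first-order conditions with complementary slackness together with the transversality condition is optimal — both agents' plans are optimal, and the candidate is a competitive equilibrium.

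It remains to observe that this equilibrium is bubbly in the sense of \S\ref{subsec:prelim_defn}: the asset pays no dividends, so its fundamental value is $V_t=0$, whereas $P_t=pG^t>0$; equivalently, $\sum_{t=1}^\infty D_t/P_t=0<\infty$, so Lemma \ref{lem:bubble} detects a bubble. I expect the only delicate step to be the last part of the optimality argument: making rigorous, in the presence of the short-sales constraint, that the first-order conditions with complementary slackness plus the transversality condition exclude \emph{every} deviation, including large portfolio changes and deviations that postpone consumption gains to the distant future. The rest is bookkeeping, since the economic content has already been extracted in deriving \eqref{eq:foc_rich} and \eqref{eq:foc_poor}.
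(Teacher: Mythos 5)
Your proposal is correct and follows essentially the same route as the paper: conjecture the price path $P_t=pG^t$ and the alternating portfolio rule, pin down $p$ from the rich agent's Euler equation \eqref{eq:foc_rich}, verify the poor agent's inequality \eqref{eq:foc_poor} and the transversality condition under $\beta G^{1-\gamma}<1$, and invoke sufficiency of the Kuhn--Tucker conditions for the concave problem. Your version merely makes explicit the feasibility and market-clearing bookkeeping that the paper leaves implicit in the discussion preceding the proposition.
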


Many papers build on this example, including \citet{Woodford1990}, \citet[Example 1]{Kocherlakota1992}, \citet[Example 7.1]{HuangWerner2000}, and \citet[Example 1]{Werner2014}.

\citet{ScheinkmanWeiss1986} extend \citet{Bewley1980}'s model to continuous time with endogenous labor supply. Specifically, there are two agents indexed by $i=1,2$ as well as two exogenous aggregate states indexed by $i=1,2$. States exogenously switch at Poisson rate $\lambda>0$, and in state $i$, only agent $i$ can supply labor. There is a linear production technology that produces the consumption good from labor one-for-one. As in \citet{Bewley1980}, there is a pure bubble asset in unit supply, which cannot be shorted. Each agent $i$ seeks to maximize utility
\begin{equation*}
    \E_0\int_0^\infty \e^{-\beta t}(u(c_{it})-l_{it})\diff t
\end{equation*}
subject to the budget and shortsales constraints, where $\beta>0$ is the discount rate, $u$ is the flow utility from consumption $c_{it}\ge 0$, and $l_{it}\ge 0$ is labor supply. Note that the utility function is quasi-linear in labor. In this model, the state variables are the asset holdings of agent 1 denoted by $x\in [0,1]$ and the productivity state $i\in\set{1,2}$. \citet{ScheinkmanWeiss1986} prove the existence of a recursive equilibrium with asset price $P(x,i)>0$ and derive a differential equation for $P(x,i)$ when $u(c)=\log c$. Because the productivity state is persistent, in equilibrium the unproductive agent gradually sells the asset to finance consumption. \citet{ScheinkmanWeiss1986} find that this model can generate rich aggregate dynamics.

\subsection{Sufficient conditions for bubbles}\label{subsec:found_sufficient}

The pure bubble models of \citet{Samuelson1958} and \citet{Bewley1980} are specific examples, and it is of theoretical interest to know under what general conditions bubbles are possible. \citet{OkunoZilcha1983} study a stationary OLG model, meaning endowments and preferences are time-invariant. They show the existence of a Pareto efficient steady state and that the steady state allocation is a competitive equilibrium with or without valued fiat money if either the number of goods at each time or the number of agent types in each generation equals 1. Although they do not explicitly mention it, an immediate corollary is that in a one-good, one-agent OLG model like \S\ref{subsec:found_OLG}, if the autarky allocation is Pareto inefficient, then there exists a bubbly equilibrium, which is Pareto efficient.

\citet{AiyagariPeled1991} extend the model of \citet{OkunoZilcha1983} to a Markov setting (with potentially a linear storage technology) and show that a stationary allocation is Pareto efficient if and only if the matrix of Arrow prices has spectral radius at most 1. Furthermore, they prove the existence of such equilibria. The proof strategy is to let the asset pay dividend $\epsilon D(s)$ in state $s$, prove the existence of efficient stationary equilibrium, and let $\epsilon\to 0$. To illustrate this result, consider the model in \S\ref{subsec:found_OLG}. At the autarky allocation, the Arrow price (the price of a risk-free asset that pays 1 next period) equals
\begin{equation*}
    q=(U_z/U_y)(a,b)=\frac{\beta a}{(1-\beta)b}.
\end{equation*}
Hence Pareto inefficiency (which implies the existence of a bubbly equilibrium) is equivalent to $q>1$, which is precisely the condition in Proposition \ref{prop:samuelson}\ref{item:samuelson_b}. \citet{BarbieHillebrand2018} consider the case with a general concave production function as in \citet{Tirole1985} but with \iid productivity shocks and obtain necessary and sufficient conditions for the existence of bubbly Markov equilibria. \citet{BloiseCitanna2019} study an infinite-horizon incomplete-market model with limited enforcement of debt contracts and show the existence of a bubbly equilibrium when there are gains from trade (autarky is conditionally Pareto inefficient).

\subsection{Necessary conditions for bubbles}\label{subsec:found_necessary}

We next turn to necessary conditions for existence of bubbles, or equivalently sufficient conditions for nonexistence of bubbles. For illustration, consider the infinite-horizon, deterministic setting in \S\ref{sec:prelim} with sequential trade, and assume shortsales are not allowed. Consider an infinitely-lived agent with utility $\sum_{t=0}^\infty \beta^tu(c_t)$, where $\beta>0$ is the discount factor and $u$ is the period utility function satisfying $u'>0$. If the shortsales constraint does not bind, then the first-order condition for optimality implies the Euler equation
\begin{equation}
    u'(c_t)P_t=\beta u'(c_{t+1})(P_{t+1}+D_{t+1}). \label{eq:Euler}
\end{equation}
Comparing the Euler equation \eqref{eq:Euler} to the no-arbitrage condition \eqref{eq:noarbitrage}, the Arrow-Debreu price must be $q_t=\beta^t u'(c_t)/u'(c_0)$ and hence the fundamental value of the asset is
\begin{equation}
    V_0\coloneqq \sum_{t=1}^\infty \beta^t\frac{u'(c_t)}{u'(c_0)}D_t. \label{eq:V0}
\end{equation}
By assumption, the shortsales constraint does not bind. Suppose in addition that the shortsales constraint is uniformly slack, meaning that the agent can reduce the asset holdings by small enough $\epsilon>0$ in every period. Now consider the following feasible trading strategy: starting from the optimal consumption plan $\set{c_t}$, the agent sells $\epsilon>0$ shares of the asset at $t=0$ and keeps this position forever. Then the agent has additional income $P_0\epsilon$ at $t=0$ from the proceeds of sales but gives up dividend income $D_t\epsilon$ at $t>0$. The change in lifetime utility, to first-order approximation, is
\begin{align*}
    0&\ge \underbrace{u(c_0+P_0\epsilon)+\sum_{t=1}^\infty \beta^tu(c_t-D_t\epsilon)}_{\text{feasible}}-\underbrace{\sum_{t=0}^\infty \beta^tu(c_t)}_{\text{optimal}}\\
    &\approx u'(c_0)P_0\epsilon-\sum_{t=1}^\infty \beta^tu'(c_t)D_t\epsilon=\epsilon u'(c_0)(P_0-V_0),
\end{align*}
where the last line uses \eqref{eq:V0}. Therefore $P_0\le V_0$, and because $P_0\ge V_0$ always, we must have $P_0=V_0$: there is no bubble. The intuition is straightforward: if $P_0>V_0$, an unconstrained agent can sell the overpriced asset to enjoy an abnormally high consumption today at the expense of reducing future consumption by giving up dividends, which increases lifetime utility. Thus, in order for an asset price bubble to exist, agents must be financially constrained. In other words, agents' asset holdings must approach 0 infinitely often, which is the essence of \citet[Proposition 3]{Kocherlakota1992}. As discussed in \citet{Kamihigashi1998}, this argument fails in OLG models because the old liquidate asset holdings before exiting the economy, so reducing asset holdings by $\epsilon$ forever is infeasible. In contrast, in representative-agent models, because the agent holds the entire asset in equilibrium, it is very difficult to generate bubbles \citep{Kamihigashi1998,MontrucchioPrivileggi2001}.

In a bubbly equilibrium, by the above argument the asset holdings of every agent need to approach 0 infinitely often. But if the aggregate supply of the asset is positive, because the asset needs to be held by some agent, there must exist an agent whose asset holdings fluctuate between two positive numbers infinitely often. \citet[Proposition 4]{Kocherlakota1992} shows that the present value of the endowments of such an agent is infinite, and consequently, so is the present value of the aggregate endowment.\footnote{The proof contained an error and was corrected by \citet{KocherlakotaToda2023JET}.} We illustrate this point using an argument based on \citet{SantosWoodford1997}. Consider an infinite-horizon economy with time indexed by $t=0,1,\dotsc$ and agents indexed by $i\in I$, where $I$ is either a finite or countably infinite set. Following the notation in \S\ref{sec:prelim}, let $q_t>0$ be the Arrow-Debreu price, $q=(q_t)_{t=0}^\infty$ the price vector, $P_t$ be the asset price, and $D=(D_t)_{t=0}^\infty$ be the dividend stream. Let $c_i=(c_{it})_{t=0}^\infty$ and $e_i=(e_{it})_{t=0}^\infty$ be the consumption and endowment vectors of agent $i$. Suppose agent $i$ is endowed with $x_{i0}$ shares of the asset, where $\sum_{i\in I}x_{i0}=1$.

With locally nonsatiated preferences, the budget constraint implies
\begin{equation*}
    q\cdot c_i=q\cdot e_i+(P_0+D_0)x_{i0}.
\end{equation*}
Aggregating across agents and using $\sum_{i\in I}x_{i0}=1$, we obtain
\begin{equation*}
    q\cdot \sum_{i\in I} c_i=q\cdot \sum_{i\in I}e_i+P_0+D_0.
\end{equation*}
Using the commodity market clearing condition $\sum_{i\in I}c_i=\sum_{i\in I}e_i+D$ and $q_0=1$, we obtain
\begin{equation*}
    q\cdot \sum_{i\in I}e_i+\sum_{t=1}^\infty q_tD_t=q\cdot \sum_{i\in I}e_i+P_0.
\end{equation*}
Therefore if the present value of aggregate endowment is finite, so $q\cdot \sum_{i\in I}e_i<\infty$, canceling this term from both sides yields $P_0=\sum_{t=1}^\infty q_tD_t\eqqcolon V_0$, so the asset is priced at the fundamental value. The contrapositive is that if there is a bubble, so $P_0>V_0$, then the present value of aggregate endowment must be infinite:
\begin{equation}
    q\cdot \sum_{i\in I}e_i=\infty. \label{eq:infinite_endowment}
\end{equation}
For instance, in the \citet{Samuelson1958} model, if we focus on the stationary bubbly equilibrium, because the asset price $P_t$ is constant, the gross risk-free rate is $R_t=P_{t+1}/P_t=1$. Because the aggregate endowment in each period is $a+b$, the present value of the aggregate endowment is infinite. Similarly, in the \citet{Bewley1980} model, the gross risk-free rate is $R_t=P_{t+1}/P_t=G$, and the aggregate endowment at time $t$ is $(a+b)G^t$, so the present value is again infinite.

\citet[Theorem 3.3]{SantosWoodford1997} significantly extend this result under incomplete markets and borrowing constraints and show that if the present value of the aggregate endowment is finite, then the price of an asset in positive net supply or with finite maturity equals its fundamental value. Furthermore, their Corollary 3.4 (together with Lemma 2.4) shows that, when the asset pays nonnegligible dividends relative to the aggregate endowment, so there exists $\eta>0$ such that $D\ge \eta \sum_{i\in I}e_i$, bubbles are impossible. To see why, if there is a bubble, we know $q\cdot \sum_{i\in I}e_i=\infty$, but then
\begin{equation*}
    P_0+D_0\ge V_0+D_0=q\cdot D\ge \eta q\cdot \sum_{i\in I}e_i=\infty,
\end{equation*}
which is a contradiction. Thus the results of \citet{SantosWoodford1997} have been interpreted as fundamental difficulties for generating asset price bubbles in dividend-paying assets.

The Bubble Impossibility Theorem of \citet{SantosWoodford1997} has been extended in several directions, including debt constraints \citep{Kocherlakota2008,Werner2014} and collateral constraints \citep*{AraujoPascoaTorres-Martinez2011}. Furthermore, \citet{HellwigLorenzoni2009} study a general equilibrium model with limited commitment and show that equilibrium allocations with self-enforcing private debt is equivalent to those sustained with unbacked public debt or rational bubbles.

\section{Further topics on rational bubbles}\label{sec:topic}

\subsection{Stochastic bubbles}

\citet{Blanchard1979} studies a reduced-form asset pricing equation and considers the possibility that a bubble ends with positive probability. \citet{Weil1987} introduces this mechanism of ``stochastic bubble'' into the two-period OLG model. A merit of stochastic bubbles is to examine the macroeconomic impact of bubbles by taking into account the possibility of their bursts. As an illustration, consider the model in \S\ref{subsec:found_OLG}. We seek an equilibrium of the following form:
\begin{enumerate*}
    \item the asset initially trades at price $P>0$,
    \item each period, the asset trades at price $P>0$ with probability $\upsilon$ and becomes worthless (complete collapse) with probability $1-\upsilon$, 
    \item once the asset becomes worthless, its re-emergence is not expected.
\end{enumerate*}
Letting $x$ be the asset holdings of the young, the objective function is
\begin{equation*}
    (1-\beta)\log(a-Px)+\beta(\upsilon\log(b+Px)+(1-\upsilon)\log b).
\end{equation*}
Taking the first-order condition and imposing the market clearing condition $x=1$, we obtain
\begin{equation*}
    -\frac{1-\beta}{a-P}+\frac{\upsilon\beta}{b+P}=0\iff P=\frac{\upsilon\beta a-(1-\beta)b}{1-\beta+\upsilon\beta}.
\end{equation*}
This is obviously a rational expectations equilibrium if $\upsilon\beta a>(1-\beta)b$. The case $\upsilon=1$ corresponds to the deterministic model in \S\ref{subsec:found_OLG}. The condition implies that bubbles with a high probability of collapse cannot exist. Intuitively, in order to compensate for the risk of bursting, those bubbles would need to grow so fast that they would become too large to be feasible (to be sustained by the young's endowment) in equilibrium. By a similar argument to Proposition \ref{prop:samuelson}, there also exist a continuum of nonstationary equilibria in which the asset price converges to zero conditional on the bubble not bursting. There are many applications of stochastic bubbles, including \citet{CaballeroKrishnamurthy2006}, \citet{Kocherlakota2009}, \citet{HiranoYanagawa2010,HiranoYanagawa2017}, \citet{FarhiTirole2012}, \citet*{HiranoInabaYanagawa2015}, \citet{Clain-Chamosset-YvrardKamihigashi2017}, \citet*{BiswasHansonPhan2020}, \citet{Bonchi2023}, and \citet{HoriIm2023insurance}. 

\subsection{Crowd-out and crowd-in effects of bubbles}

An asset price bubble diverts savings from capital to the bubble asset, and \emph{ceteris paribus}, reduces capital accumulation and potentially growth. To illustrate this crowd-out effect of bubbles, we consider \citet{Tirole1985}'s OLG model with capital accumulation. For simplicity, suppose that each generation has the Cobb-Douglas utility function \eqref{eq:CD}, the young have one unit of labor endowment that is inelastically supplied, the old have no labor endowment, and the aggregate production function is Cobb-Douglas with capital depreciation $\delta\in [0,1]$:
\begin{equation}
    F(K,L)=AK^\alpha L^{1-\alpha}+(1-\delta)K. \label{eq:F}
\end{equation}
Let us focus on a steady state. Profit maximization implies the wage and rental rate (gross return on capital)
\begin{subequations}
\begin{align}
    w&=F_L(K,1)=A(1-\alpha)K^\alpha, \label{eq:w_Tirole}\\
    R&=F_K(K,1)=A\alpha K^{\alpha-1}+1-\delta. \label{eq:R_Tirole}
\end{align}
\end{subequations}
Due to Cobb-Douglas utility, aggregate savings equals $\beta w$. In a fundamental steady state, aggregate savings equals capital, so
\begin{equation}
    K=\beta A(1-\alpha)K^\alpha\iff K_f\coloneqq [\beta A(1-\alpha)]^\frac{1}{1-\alpha}. \label{eq:Kf_Tirole}
\end{equation}
Using \eqref{eq:R_Tirole}, we obtain the fundamental gross risk-free rate
\begin{equation*}
    R_f\coloneqq A\alpha K_f^{\alpha-1}+1-\delta=\frac{1}{\beta}\frac{\alpha}{1-\alpha}+1-\delta.
\end{equation*}
We next consider a bubbly steady state. If a bubbly steady state exists, because the gross return on the pure bubble asset is 1, we must have
\begin{equation}
    1=R_b=A\alpha K^{\alpha-1}+1-\delta\iff K_b\coloneqq (A\alpha/\delta)^\frac{1}{1-\alpha}. \label{eq:Kb_Tirole}
\end{equation}
Aggregate savings equals capital plus the bubble price, so \eqref{eq:Kf_Tirole} is modified to
\begin{equation*}
    K+P=\beta A(1-\alpha)K^\alpha\iff P=(A\alpha/\delta)^\frac{1}{1-\alpha}\left(\beta\delta\frac{1-\alpha}{\alpha}-1\right).
\end{equation*}
In order for $P>0$, it is necessary and sufficient that
\begin{equation}
    \beta\delta\frac{1-\alpha}{\alpha}-1>0\iff R_f=\frac{1}{\beta}\frac{\alpha}{1-\alpha}+1-\delta<1. \label{eq:bubble_cond_Tirole}
\end{equation}

Therefore for parameter specifications under which a bubbly steady state exists, we have $R_f<R_b=1$ and hence $K_f>K_b$. The intuition is straightforward. If there is no pure bubble asset, all savings by the young flows to capital investment. But once pure bubble assets pop up in the economy, they crowd young's savings away from capital investment, thereby reducing the capital stock. \citet{Saint-Paul1992} extends \citet{Blanchard1985JPE}'s perpetual youth model with endogenous growth and shows that an increase in public debt reduces long-run economic growth. \citet{GrossmandYanagawa1993} obtain a similar result in a two-period OLG pure bubble model. \citet{CahucChalle2012} show the mechanism of how asset bubbles distort the allocation of labor across sectors, \ie, bubbles crowd out productive labor, in addition to crowding out productive capital. \citet{Plantin2023} shows that an accommodative monetary policy has a crowd-out effect on productive investments by creating asset bubbles.  

The crowd-out effect also arises in infinite-horizon models. \citet*{AokiNakajimaNikolov2014} study a model with infinitely-lived agents with log utility who have access to a stochastic storage technology ($AK$ model) subject to idiosyncratic investment risk, which builds on the OLG model of \citet{Kitagawa1994}. They show that asset bubbles reduce economic growth through the crowd-out effect. \citet*{Guerron-QuintanaHiranoJinnai2023} construct a model of recurrent bubbles and show that expectations about future bubbles have a crowd-out effect on capital investments even before realization. Hence, high-frequency bubbles reduce average economic growth.

Bubbly episodes (such as the U.S. dot-com bubble) are typically associated with a high level of investment, which goes against the crowd-out effect. \citet{Woodford1990} introduces capital accumulation in \citet{Bewley1980}'s model discussed in \S\ref{subsec:found_infhor}. In one of his models, agents differ not by endowments but by investment opportunities: an investment opportunity arrives to each agent every other period. Borrowing and lending is impossible. In this environment, government bonds (which play the same role as bubble assets) crowd investment in because agents without investment opportunities purchase government bonds as a means of savings and sell when they have investment opportunities, which finance more investment.

To illustrate the crowd-in effect, consider the following modification to \citet{Tirole1985}'s model. A newborn agent has entrepreneurial ability with probability $\pi$: specifically, the agent has access to the production technology \eqref{eq:F} with probability $\pi$, while $A=0$ with probability $1-\pi$, which is realized before investment. Agents cannot borrow or lend but can purchase a pure bubble asset. In a fundamental steady state, aggregate savings $\beta w$ equals aggregate capital $K$. By assumption, fraction $\pi$ of capital is productive, while fraction $1-\pi$ goes to storage. Therefore \eqref{eq:Kf_Tirole} becomes
\begin{equation}
    \pi K+(1-\pi)K=\beta A(1-\alpha)(\pi K)^\alpha\iff K_f\coloneqq [\beta A(1-\alpha)\pi^\alpha]^\frac{1}{1-\alpha}. \label{eq:Kf_crowdin}
\end{equation}
In contrast, in the bubbly steady state, unproductive agents hold the pure bubble asset (which yields a gross return 1, higher than $1-\delta$ from storage) and productive agents hold aggregate capital $K$, so \eqref{eq:Kb_Tirole} continues to hold. Therefore if $\pi$ is sufficiently small, then \eqref{eq:Kb_Tirole} and \eqref{eq:Kf_crowdin} imply $K_b>K_f$: we have crowd-in.

The crowd-in effect could arise for various mechanisms but in most papers written since the 2008 financial crisis, the idea is based on \citet{Woodford1990}'s mechanism of ex-ante heterogeneity, including \citet{Kocherlakota2009}, \citet{HiranoYanagawa2010, HiranoYanagawa2017}, \citet*{HiranoInabaYanagawa2015}, \citet*{BiswasHansonPhan2020} (infinite-horizon models with heterogeneous productivity), or \citet{FarhiTirole2012} and \citet*{ClainRaurichSeegmuller2023} (three-period OLG models with heterogeneous savings motive). These papers consider stochastic bubbles instead of safe government bonds. Since bubble assets are risky, they yield higher returns than safe assets, which endogenously increases the wealth of productive entrepreneurs with investment opportunities through general equilibrium effects, generating crowd-in effects. The mechanism in \citet{MartinVentura2012} is different. They employ a two-period OLG model assuming that the young with investment opportunities cannot borrow but can create bubble assets (money creation). This assumption of liquidity creation directly produces wealth effects for productive young agents and is crucial for their crowd-in effect. \citet{Queiros2023} develops a two-period OLG model and studies the interplay between bubble creations by newborn firms and product market competition. He shows that bubble creation can provide an entry subsidy, producing crowd-in effects on capital. \citet{Bonchi2023} considers a three-period OLG model in which the middle-aged can create bubbles and the young use future bubbles as a collateral, which produces a crowd-in effect.

\subsection{Bubbles and financial conditions}

Bubbles are related to low interest rate environments. At the fundamental equilibrium allocation in the OLG model of \S\ref{subsec:found_OLG}, the gross risk-free rate is
\begin{equation*}
    R=(U_y/U_z)(a,b)=\frac{(1-\beta)b}{\beta a}<1
\end{equation*}
under the bubble existence condition in Proposition \ref{prop:samuelson}\ref{item:samuelson_b}. The same is true in \citet{Tirole1985}'s model: see \eqref{eq:bubble_cond_Tirole}. Thus, equilibria with pure bubbles can exist when the fundamental interest rate is negative (lower than the growth rate of the economy), implying that agents have a strong savings motive. This result also holds in the presence of deterministic or stochastic storage technologies as in \citet{Wallace1980} and \citet{AiyagariPeled1991}.

As discussed in \S\ref{subsec:found_necessary}, asset price bubbles can arise only if the present value of the aggregate endowment is infinite. In OLG models, this is not so demanding because infinite present value and infinitely many agents with finite lives could be mutually consistent. In contrast, in infinite-horizon models, some financial frictions such as borrowing constraints, shortsales constraints, or imperfect insurance markets are necessary to prevent agents from capitalizing infinite present value. In this respect, infinite-horizon models illustrate the role of financial frictions in generating asset bubbles. The literature since the 2008 financial crisis has mainly examined the relationship between asset bubbles and financial conditions in various settings. In the following, we focus on infinite-horizon models. 

\citet{Kocherlakota2009} considers an infinite-horizon model in which agents have log utility and have access to a Cobb-Douglas production function (investment opportunities arrive) every period with probability $\pi\in (0,1)$, which is a special case of the model of \citet{Angeletos2007}. In this economy, productive agents are willing to borrow from unproductive agents. However, \citet{Kocherlakota2009} assumes that borrowing must be collateralized by a durable asset (``housing''). In equilibrium, even though housing is intrinsically useless and costly to produce, it can be used as collateral and have a positive price, thereby facilitating lending and borrowing, as in \citet{Bewley1980} and \citet{Woodford1990}. \S\ref{sec:bare-bones} discusses such a model in detail. 

\citet{HiranoYanagawa2010,HiranoYanagawa2017} consider an $AK$ model with infinitely-lived heterogeneous agents with different productivities, as in \citet{Kiyotaki1998}. Productive agents wish to borrow, but borrowing cannot exceed a constant multiple of output. They find that bubbles are likely to emerge when the degree of pledgeability is in the middle range. 
They also connect the pledgeability constraint to the size of the crowd-in and crowd-out effects and find that within the bubble region, the crowd-in effect dominates and boosts economic growth when pledgeability is relatively low. \citet{HoriIm2023insurance} introduce imperfect insurance markets into a growth model with infinitely-lived agents. Entrepreneurs face uninsurable entrepreneurial risks but no credit constraints. They show that if the degree of entrepreneurial risks is in the middle range, bubbles are likely to emerge.\footnote{\citet{MiaoWang2018} discuss stock price bubbles in a steady-state equilibrium in relation to financial conditions. In their model, the stock price equals $V=QK+B$, where $K$ is capital, $Q$ is the price of capital, and $B$ is a constant interpreted as the bubble component. Several papers study similar models, including \citet*{MiaoWang2012, MiaoWang2014,MiaoWang2015}, \citet*{MiaoWangXu2015}, \citet*{MiaoWangXu2016}, and \citet{Ikeda2022}. All of these papers appear to have been understood as models of rational asset price bubbles attached to real assets. However, we know from Lemma \ref{lem:bubble} that bubbles in dividend-paying assets can never occur in a steady-state equilibrium. An appropriate interpretation would be that there are two steady states, one with high stock prices and the other with low stock prices. In both steady states, stock prices always reflect fundamentals, but expectations determine which steady-state equilibrium is reached. Obviously, the literature that studies the effects of self-fulfilling expectations on macroeconomic outcomes including asset prices is important, as \citet{Azariadis1981}, \citet{CassShell1983}, and \citet{Farmer1999} have paved the way. Our paper focuses on rational asset price bubbles but the two approaches are complementary and provide different insights for the determination of asset prices.}

There are also papers that analyze the effect of financial conditions on the aftermath of the burst of bubbles. For instance, \citet{HiranoYanagawa2010,HiranoYanagawa2017} show that the growth path after a collapse of bubbles qualitatively depends on financial conditions. When the condition is good, it produces V-shape recovery, while it is not, it produces a L-shape and permanent decline in economic growth. \citet*{BiswasHansonPhan2020} extend \citet*{HiranoInabaYanagawa2015} to include zero lower bound and nominal wage rigidity constraints. They show that a bursting bubble can push the economy into a secular stagnation equilibrium, where both constraints simultaneously bind, leading to a persistent recession.

\subsection{Monetary and fiscal policy}

Given the results of crowd-in and crowd-out effects of pure bubbles and the relationship between financial conditions and pure bubbles, many papers apply those results to monetary and fiscal policy. 

Concerning monetary policy and asset bubbles, a main question is whether monetary policy should react to asset price bubbles. \citet{Clain-Chamosset-YvrardSeegmuller2015} show that a monetary policy rule that responds only to inflation is destabilizing and promotes local indeterminacy, but a rule that responds also to asset bubbles can be globally and locally stabilizing. \citet{Gali2014} employs a two-period OLG model and shows that raising interest rates may end up increasing the size of bubbles, instead of decreasing it, while \citet*{DongMiaoWang2020} show that it reduces bubble volatility in an infinite-horizon economy, and \citet*{AllenBarlevyGale2023} show in a two-period OLG model that raising interest rates sufficiently eliminates any pure bubble equilibria. \citet*{AsriyanFornaroMartinVentura2021} consider optimal monetary policy in a two-period OLG model where the central bank creates money and private agents also create another money and make seigniorage revenues.

Concerning fiscal policy, a main question is what kind of fiscal policy is desirable after the burst of asset bubbles, or in light of the possibility of a bubble collapse. For instance, \citet{Kocherlakota2009} shows that issuing government bonds backed by taxes after a collapse of asset bubbles restores efficiency. \citet*{HiranoInabaYanagawa2015} extend \citet{HiranoYanagawa2010, HiranoYanagawa2017} to include government bailouts. They take into account the possibility of a bubble burst and derive the optimal time-invariant bailout policy for entrepreneurs who suffer losses by bubble bursts and workers who are taxpayers, respectively. In addition, unlike \citet{Kocherlakota2009}, they also show risky bubbles that are expected to collapse can be better than deterministic bubbles (or government bonds) from the welfare perspective of workers. \citet{AokiNikolov2015} also develop a related model of \citet{HiranoYanagawa2010, HiranoYanagawa2017} to analyze bank bailouts quantitatively. \citet{DongXu2022} is related to \citet*{HiranoInabaYanagawa2015} but examine the efficiency and welfare implications of the time-varying bailout policy. \citet{BosiPham2016} show that taxing the bubble asset could overturn the crowd-out effect in the \citet{GrossmandYanagawa1993} model.

\subsection{Bubbles in open economies}

There are also some applied papers of pure bubble models to open economies. The main questions are the effect of bubbles and their collapse on capital flow and the relationship between globalization and asset bubbles in open economies. For instance, \citet{CaballeroKrishnamurthy2006} construct a two-period OLG model of a small open economy and analyze movements in capital flow in emerging markets that occur with the onset and collapse of bubbles. \citet{Motohashi2016} examines how capital account liberalization affects the existence of bubbles in an infinite-horizon small open economy. Regarding two-country models, \citet{Basco2014}, \citet{MartinVentura2015}, \citet{Rondina2017}, and \citet{IkedaPhan2019} study how financial integration affects the existence of bubbles in two-period OLG models. \citet{Clain-Chamosset-YvrardKamihigashi2017} show in a two-country OLG exchange economy that a bubble crash in the foreign country has contagious effects, inevitably leading to a bubble crash in the home country. \citet{Shimizu2018} develops an infinite-horizon two-country model and studies the effects of asset bubbles and their collapse on each country’s economic growth.  

\subsection{Housing bubbles}

Based on the motivation that many countries have experienced housing price booms and busts, there are also papers on housing bubbles. \citet{ArceLopez-Salido2011} consider a three-period OLG model in which the young derive utility from housing service. Their model features two market imperfections: a down payment (collateral) constraint for housing purchase and the absence of a rental market. They show in Proposition 3 that when the collateral constraint is tighter than a threshold, the interest rate becomes low and the middle-aged hold vacant housing (a pure bubble asset) as a means of savings. The model of \citet{Zhao2015} is nearly identical except that it is a two-period OLG model and the middle-aged is replaced with investors who do not derive utility from housing. \citet{ChenWen2017} extend the model of \citet{Tirole1985} to two sectors with heterogeneous productivity (but identical technological growth rates) and study the transition dynamics. \citet*{JiangMiaoZhang2022} consider a two-sector OLG model with a government, where housing is a pure bubble asset produced by labor and land (owned by the government) and the government provides infrastructure that exhibits positive externality on production. Although these models may capture some aspects of housing bubbles, housing does not generate rents and its fundamental value is zero. Thus, these models are the same as pure bubble models.

\subsection{Criticisms to pure bubble models}\label{subsec:topic_criticism}

As we have seen so far, the seminal papers by \citet{Samuelson1958} and \citet{Tirole1985} have produced a vast literature and fruitful outcomes. However, the existing pure bubble literature has severe limitations. There are three major criticisms. First, although pure bubble models are useful to analyze money or monetary economies, in reality it is hard to find pure bubble assets other than money or cryptocurrency. Hence, pure bubble models are unrealistic for describing bubbles attached to real assets such as stocks, housing, and land. Indeed, by definition, they exclude dividend-paying assets so the scope for empirical or quantitative analysis to think about realistic bubbles is severely limited.

Second, pure bubble models suffer from equilibrium indeterminacy as in Proposition \ref{prop:samuelson}, which makes model predictions non-robust. Although the equilibrium indeterminacy in pure bubble models has been recognized at least since \citet{Gale1973}, the literature has selected only one of a continuum of bubbly equilibria (a saddle path or a steady state) and has advanced policy and quantitative analysis. However, this selection is not necessarily convincing because the equilibrium selected is only one point within an open set and thus has measure zero. Indeed, it is known that monetary policy implications much depend on which equilibrium we focus on \citep*{Gali2014,DongMiaoWang2020}. 

Third, there is an econometric literature on detecting asset price bubbles attached to real assets by using the price-dividend ratio \citep*{PhillipsShiYu2015,PhillipsShi2018,PhillipsShi2020}, which is considered important for policy considerations in the real world. However, pure bubble models are completely silent about the price-dividend ratio because it cannot be even defined. Hence, it is impossible to bridge the gap between pure bubble models and the bubble detection literature and, therefore, to derive meaningful policy implications from a macroeconomic perspective based on the econometric literature.

These three criticisms simply show that in describing bubbles attached to real assets, the pure bubble theory faces a fundamental limitation for applications. If the theory cannot be applied, it will be hard for the literature to develop. Indeed, although the theory of pure bubbles and financial conditions have developed since the 2008 financial crisis, there are only a few papers that apply pure bubble models to modern empirical and quantitative analysis \citep*{DomeijEllingsen2018,Guerron-QuintanaHiranoJinnai2023}.\footnote{When one of the authors (Hirano) presented earlier papers on pure bubbles \citep*{HiranoYanagawa2010,HiranoYanagawa2017,HiranoInabaYanagawa2015}, the general audience often commented that because of these criticisms, pure bubble models are useless for thinking about stock, land, and housing price bubbles and proceeding with policy analyses. Some researchers told us personally that they left the literature due to these shortcomings and difficulties in generating bubbles attached to real assets.}  

\section{Necessity of bubbles}

Following \citet{Samuelson1958}, the literature on rational bubbles has almost exclusively focused on pure bubbles, which have severe limitations as discussed in \S\ref{subsec:topic_criticism}. This section discusses models of bubbles attached to dividend-paying assets. 

\subsection{Contribution of \texorpdfstring{\citet{Wilson1981}}{}}

The first paper that we are aware of that constructs bubbles attached to dividend-paying assets is \citet{Wilson1981}. His main result is to establish the existence of competitive equilibria when both the number of agents and commodities can be infinite, which in particular includes the classical OLG model. \citet[Theorem 1]{Wilson1981} proves the existence of an equilibrium with transfer payments where the transfers can be set to zero (so budgets balance exactly) for agents endowed with only finitely many commodities. However, transfers need not be zero for agents endowed with infinitely many commodities, for instance a Lucas tree that produces dividends indefinitely.

Here we present a minimal example in the spirit of \citet[\S7]{Wilson1981}. We consider the two-period OLG model discussed in \S\ref{subsec:found_OLG} but with the following modifications: at time $t$,
\begin{enumerate*}
    \item the young has endowment $a_t>0$,
    \item the old has no endowment ($b_t=0$), and
    \item the asset pays dividend $D_t>0$.
\end{enumerate*}
Then we need to modify the budget constraints \eqref{eq:budget_OLG} such that $(a,b)$ and $P_{t+1}$ are replaced with $(a_t,0)$ and $P_{t+1}+D_{t+1}$, respectively. Note that because $D_t>0$, the asset price must be positive: $P_t>0$. The consumption of the young \eqref{eq:yt_OLG} then becomes $y_t=(1-\beta)a_t$, and the asset market clearing condition $x_t=1$ implies the asset price
\begin{equation*}
    P_t=P_tx_t=a_t-y_t=\beta a_t.
\end{equation*}
Therefore we obtain the following proposition.

\begin{prop}\label{prop:wilson}
The two-period OLG model with Cobb-Douglas utility and long-lived asset has a unique equilibrium. The equilibrium asset price is $P_t=\beta a_t>0$ and the dividend yield is $D_t/P_t=D_t/(\beta a_t)$. The equilibrium features an asset price bubble if and only if
\begin{equation}
    \sum_{t=1}^\infty \frac{D_t}{a_t}<\infty. \label{eq:bubble_cond_wilson}
\end{equation}
\end{prop}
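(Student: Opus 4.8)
The plan is to solve the model explicitly for its unique equilibrium and then read off the bubble condition from the Bubble Characterization Lemma~\ref{lem:bubble}.

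First I would establish that $P_t>0$ for all $t$, which is needed to invoke Lemma~\ref{lem:bubble}. Free disposal already forces $P_t\ge 0$. Suppose $P_t=0$ for some $t$. Then a young agent at date $t$ can acquire an arbitrary number of shares $x_t$ at zero cost, leaving date-$t$ consumption $y_t=a_t$ unchanged while old-age consumption $z_{t+1}=(P_{t+1}+D_{t+1})x_t$ is strictly increasing and unbounded in $x_t$ (here $P_{t+1}+D_{t+1}>0$ because $D_{t+1}>0$). Since the utility \eqref{eq:CD} is strictly increasing in its second argument, no optimal plan exists, contradicting individual optimality in equilibrium. Hence $P_t>0$ for every $t$.

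Next I would characterize the equilibrium. With $P_t>0$ I can eliminate $x_t$ from the two (modified) budget constraints to get the lifetime constraint $y_t+\frac{P_t}{P_{t+1}+D_{t+1}}z_{t+1}=a_t$ of the agent born at $t$; the Cobb-Douglas form then pins down the demand uniquely as $y_t=(1-\beta)a_t$, as already noted in the text. Imposing asset market clearing $x_t=1$ together with the young's budget constraint yields $P_t=P_tx_t=a_t-y_t=\beta a_t>0$, consistent with the previous step. The induced allocation --- the young at $t$ consume $(1-\beta)a_t$, the old at $t$ consume $P_t+D_t=\beta a_t+D_t$, and the initial old consume $P_0+D_0$ --- clears the goods market at every date, since $(1-\beta)a_t+(\beta a_t+D_t)=a_t+D_t$ equals date-$t$ resources (the young's endowment $a_t$ plus the dividend $D_t$, the old having no endowment). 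Uniqueness follows because each cohort's demand and the market-clearing price are both forced. Consequently $D_t/P_t=D_t/(\beta a_t)$, as claimed.

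Finally, since $P_t>0$ for all $t$, Lemma~\ref{lem:bubble} applies directly: the equilibrium contains a bubble if and only if $\sum_{t=1}^\infty D_t/P_t<\infty$, and because $\beta>0$ is a fixed constant this is equivalent to $\sum_{t=1}^\infty D_t/a_t<\infty$, which is \eqref{eq:bubble_cond_wilson}. The proof involves no hard estimate --- all the nonstationary content is already absorbed into Lemma~\ref{lem:bubble}. The only steps needing care are ruling out $P_t=0$ (to license the lemma's hypothesis) and checking that the candidate price--allocation pair is genuinely a competitive equilibrium, in particular that the goods market clears, rather than merely consistent with asset market clearing.
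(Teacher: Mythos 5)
Your proposal is correct and follows essentially the same route as the paper: derive the Cobb--Douglas demand $y_t=(1-\beta)a_t$, impose asset market clearing to get $P_t=\beta a_t>0$, and apply the Bubble Characterization Lemma~\ref{lem:bubble} to the dividend yield $D_t/(\beta a_t)$. The extra details you supply (ruling out $P_t=0$ via unbounded demand when $D_{t+1}>0$, and verifying goods market clearing) are exactly the steps the paper leaves implicit in the ``preceding discussion.''
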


\begin{proof}
The uniqueness and characterization of equilibrium follow from the preceding discussion. Since the dividend yield is $D_t/P_t=D_t/(\beta a_t)$, by the Bubble Characterization Lemma \ref{lem:bubble}, the condition \eqref{eq:bubble_cond_wilson} is necessary and sufficient for an asset price bubble.
\end{proof}

Intuitively, pulled up by young's growing incomes (endowments), the asset prices will rise at a faster rate than dividends, deviating from the fundamental value. The example in \citet[\S7]{Wilson1981} differs from Proposition \ref{prop:wilson} in that the utility function is linear and endowments of the old are positive, but it shares the feature that dividends become asymptotically negligible compared to endowments in the sense of \eqref{eq:bubble_cond_wilson}.

\subsection{Contribution of \texorpdfstring{\citet{Tirole1985}}{}}

Proposition 1(c) of \citet{Tirole1985} recognizes the possibility that bubbles are necessary for equilibrium existence if the interest rate without bubbles is negative. Although he gives some explanations on p.~1506 in the sentence starting with ``The intuition behind this fact roughly runs as follows,'' he did not necessarily provide a formal proof. In \citet{Tirole1985}, the proof of the nonexistence of fundamental equilibria appears at the bottom of p.~1522 and the top of p.~1523. The proof uses a convergence result discussed in Lemma 2. However, this convergence heavily relies on the monotonicity condition on the function $\psi$ defined in Equation (7) on p.~1502. This monotonicity/stability condition is a high-level assumption that need not be satisfied in a general setting. In fact, Tirole does not provide any example for which this assumption is satisfied, and in OLG models it is well known that there are robust examples of equilibrium with cycles \citep[\S 5]{GeanakoplosPolemarchakis1991}, which necessarily violates this assumption. To the best of our understanding, his Proposition 1(c) is limited to providing an important conjecture.

\subsection{Bubble Necessity Theorem}
While \citet{Wilson1981} made the surprising discovery of the nonexistence of fundamental equilibria, he only gave an example in a fairly limited setting. Therefore, it is not obvious to what extent there is generality and how relevant the result is, nor is it obvious what new insights and asset pricing implications can be drawn when we consider more general macroeconomic models. 

\citet{HiranoTodaNecessity} generalize \citet{Wilson1981}'s result in workhorse macroeconomic models. They establish the Bubble Necessity Theorem and present a conceptually new idea of the necessity of asset price bubbles. Necessity (bubbles \emph{must} arise under some conditions) is a fundamentally different concept from possibility (bubbles \emph{can} arise in pure bubble models). More specifically, \citet{HiranoTodaNecessity} present a large class of plausible economic models with dividend-paying assets in which asset price bubbles are a necessity or inevitable in the sense that
\begin{enumerate*}
    \item the economy admits a bubbly equilibrium but
    \item neither fundamental equilibria nor bubbly equilibria that become asymptotically bubbleless exist, unlike Proposition \ref{prop:samuelson}.
\end{enumerate*}
In other words, all equilibria are bubbly with non-negligible bubble sizes relative to the economy. The key bubble necessity condition is
\begin{equation}
    R<G_d<G, \label{eq:necessity}
\end{equation}
where $G$ is the long-run endowment growth rate, $G_d$ is the long-run dividend growth rate, and $R$ is the counterfactual long-run autarky interest rate. The intuition is as follows. If a fundamental equilibrium exists, in the long run the asset price (the present value of dividends) must grow at the same rate of $G_d$. Then the asset price becomes negligible relative to endowments because $G_d<G$ and the equilibrium consumption allocation approaches autarky. With an autarky interest rate $R<G_d$, the present value of dividends (the fundamental value of the asset) becomes infinite, which is of course impossible in equilibrium.

\citet{HiranoTodaNecessity} establish the Bubble Necessity Theorem in abstract OLG models and in a Bewley-type model satisfying \eqref{eq:necessity}. The following proposition is a special case. (We intentionally impose strong assumptions for concise statement; see \citet{HiranoTodaNecessity} for more general results.)

\begin{prop}[Necessity of bubbles]\label{prop:necessity}
Consider a two-period OLG model with a long-lived asset satisfying the following conditions.
\begin{enumerate}
    \item The utility function of generation $t$ is $U(y_t,z_{t+1})$, where $U$ is continuously differentiable, homogeneous of degree 1, and quasi-concave.
    \item The date $t$ endowments of the young and old are $(a_t,b_t)=(aG^t,bG^t)$.
    \item The date $t$ dividend of the asset is $D_t=DG_d^t$.
\end{enumerate}
Let $R=(U_y/U_z)(a,b)$ be the long-run autarky interest rate. If \eqref{eq:necessity} holds, then the equilibrium asset price exhibits a bubble.
\end{prop}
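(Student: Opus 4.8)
The plan is to argue by contradiction through the Bubble Characterization Lemma~\ref{lem:bubble}. Since $D_t>0$ forces $P_t\ge V_t>0$ for every $t$ (via \eqref{eq:Pt}), the dichotomy following \eqref{eq:Pt} says the equilibrium is either fundamental or bubbly, and by Lemma~\ref{lem:bubble} it fails to be bubbly exactly when $\sum_{t\ge1}D_t/P_t=\infty$. I would assume the latter and derive a contradiction from the trend condition \eqref{eq:necessity}.

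First, set up and detrend. With generation $t$'s endowments being $aG^t$ when young and $bG^t$ when old (the normalization under which $R=(U_y/U_z)(a,b)$ is the autarky MRS), market clearing $x_t=1$ gives $y_t=aG^t-P_t$ and $z_{t+1}=bG^t+P_{t+1}+D_{t+1}$, and the first-order condition of generation $t$ gives the Euler/no-arbitrage relation $U_y(y_t,z_{t+1})/U_z(y_t,z_{t+1})=(P_{t+1}+D_{t+1})/P_t=:R_{t+1}$. Homogeneity of degree $1$ makes $U_y/U_z$ a function $f$ of the single ratio $z_{t+1}/y_t$ only; $f$ is continuous and increasing (a standard consequence of quasi-concavity) with $f(b/a)=R$. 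Writing $\hat P_t:=P_t/G^t\in(0,a)$ and $\delta_t:=D(G_d/G)^t\to0$, the equilibrium reduces to the normalized recursion
\[
R_{t+1}=\frac{G(\hat P_{t+1}+\delta_{t+1})}{\hat P_t}=f\!\left(\frac{b+G(\hat P_{t+1}+\delta_{t+1})}{a-\hat P_t}\right),\qquad 0<\hat P_t<a;
\]
since the argument of $f$ is always at least $b/a$, we have $R_{t+1}\ge R$ for all $t$.

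The crux is to show $\hat P_t$ cannot approach $0$: there is $\varepsilon_0>0$ with $\hat P_t\ge\varepsilon_0$ for all $t$. The mechanism is that, because $R<G$, the origin is an \emph{attracting} point of the recursion, while the persistent dividend term $\delta_{t+1}$ together with $R<G_d$ makes the attraction fatal. Concretely: $R_{t+1}\hat P_t=G(\hat P_{t+1}+\delta_{t+1})\le G(a+D)$, so for $\hat P_t\le a/2$ the argument of $f$ is bounded, giving $R_{t+1}\le f\bigl((b+\bar R\hat P_t)/(a-\hat P_t)\bigr)$ for a fixed $\bar R$; this upper bound tends to $f(b/a)=R$ as $\hat P_t\to0$. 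Choose $\varepsilon_0\in(0,a/2)$ so small that $R_{t+1}<G$ whenever $\hat P_t<\varepsilon_0$. Then on $(0,\varepsilon_0)$ we get $\hat P_{t+1}=(R_{t+1}/G)\hat P_t-\delta_{t+1}<\hat P_t$, so once $\hat P_t$ enters $(0,\varepsilon_0)$ it stays there, decreases monotonically, and converges to some $\ell\ge0$; $\ell>0$ would force $R_{t+1}\to G$, impossible, so $\ell=0$ and $R_{t+1}\to R$. Now fix $\eta>0$ with $R+\eta<G_d$ (here $R<G_d$ is used decisively): for large $t$, $R_{t+1}\le R+\eta$, hence $\hat P_{t+1}\le A\hat P_t-\delta_{t+1}$ with $A:=(R+\eta)/G<G_d/G$. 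Iterating and using $AG/G_d=(R+\eta)/G_d<1$, a routine geometric estimate shows $\sum_i A^i\delta_{t+1-i}$ is of exact order $(G_d/G)^{t+1}$, while the leading term $A^{t+1-t_1}\hat P_{t_1}$ decays like $A^{t}$, negligible relative to $(G_d/G)^{t}$ since $A<G_d/G$; hence $\hat P_{t+1}<0$ for large $t$, contradicting $\hat P_{t+1}>0$. Therefore $\hat P_t\ge\varepsilon_0$ for all $t$, so $D_t/P_t=\delta_t/\hat P_t\le D\varepsilon_0^{-1}(G_d/G)^t$ and $\sum_t D_t/P_t\le D\varepsilon_0^{-1}\sum_t(G_d/G)^t<\infty$ (using $G_d<G$) --- contradicting the assumption. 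By Lemma~\ref{lem:bubble} the equilibrium exhibits a bubble.

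The routine parts are the equilibrium accounting, the basic properties of $f$, and the geometric-series bookkeeping. The main obstacle is the ``$\hat P=0$ is a trap'' step: one must confirm, using the regularity of $U$, that the recursion is well behaved near $\hat P=0$ (so that $R_{t+1}$ is genuinely pinned near $R$ and the normalized map is decreasing there), and then carry out the quantitative iterated inequality that converts $R<G_d$ into the infeasibility of any path with $\hat P_t\to0$. It is worth stressing that this argument leaves $\hat P_t$ completely free to fluctuate --- even periodically, as is known to happen in OLG models \citep{GeanakoplosPolemarchakis1991} --- in the region $[\varepsilon_0,a)$; unlike Tirole's Lemma~2, no global monotonicity or stability of the dynamics is invoked, only the local behavior at $0$ together with the full strength of \eqref{eq:necessity}. (For the stronger statements in the surrounding text --- that every equilibrium carries a non-negligible bubble and that a bubbly equilibrium exists --- one additionally exhibits the detrended level $\hat P^\ast\in(0,a)$ solving $f\bigl((b+G\hat P^\ast)/(a-\hat P^\ast)\bigr)=G$, which is positive precisely because $R=f(b/a)<G$.)
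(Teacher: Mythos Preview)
The paper does not supply its own proof of Proposition~\ref{prop:necessity}; it states the result as a special case of the Bubble Necessity Theorem in \citet{HiranoTodaNecessity} and offers only the heuristic paragraph surrounding \eqref{eq:necessity}. Your argument is a correct formalization of precisely that heuristic: you show that if the detrended price $\hat P_t=P_t/G^t$ could drift to~$0$ the allocation would approach autarky and the implied interest rate would approach $R$, and then convert the paper's phrase ``the present value of dividends becomes infinite'' into the rigorous iterated inequality $\hat P_{t+1}\le A\hat P_t-\delta_{t+1}$ with $A=(R+\eta)/G<G_d/G$, which forces $\hat P_{t+k}<0$---so your approach and the paper's intended mechanism coincide. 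The one step worth tightening is the bound pinning $R_{t+1}$ near $R$ for small $\hat P_t$: your use of $R_{t+1}\hat P_t=G(\hat P_{t+1}+\delta_{t+1})<G(a+D)$ to bound the argument of $f$ and then bootstrap once to $R_{t+1}\le f\bigl((b+\bar R\hat P_t)/(a-\hat P_t)\bigr)$ is the right idea and does work, but it deserves to be spelled out as a two-step estimate rather than compressed into one line.
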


Note that Proposition \ref{prop:necessity} does not rule out the possibility of multiple equilibria.\footnote{In addition to showing the necessity of asset bubbles, \citet{HiranoTodaNecessity} also show that under some parameter values, there is a unique asymptotically bubbly equilibrium.} However, and more importantly, it rules out the existence of fundamental equilibria and of bubbly equilibria that become asymptotically bubbleless. Hence, it establishes the necessity of bubbly equilibria.

\citet{HiranoTodaNecessity} also argue that the condition $G_d<G$ naturally arises. In reality, the economy consists of multiple sectors, and there is no reason to expect homogeneous growth rates across sectors. Once we consider heterogeneous growth rates of different sectors or different productivity growth rates of different production factors, \emph{unbalanced growth} occurs and the condition $G_d<G$ naturally holds,\footnote{\citet{Baumol1967} is the seminal paper about unbalanced growth, which led to a literature on structural transformation. For recent developments, see for instance \citet{Matsuyama1992}, \citet{AcemogluGuerrieri2008}, \citet{BueraKaboski2012}, \citet{Boppart2014}, \citet*{CominLashkariMestieri2021}, and \citet{FujiwaraMatsuyama2022}. \citet{HiranoTodaUnbalanced} is a simple example that considers the interactions between structural transformation and asset price bubbles.} implying that asset price bubbles will emerge within the larger historical process of structural transformation. Indeed, looking back at history, it is more natural to assume that when new technological innovations occur, including the Industrial Revolution, they cause unbalanced growth. 

Based on the idea that unbalanced growth and asset price bubbles are closely connected, \citet{HiranoTodaUnbalanced} present a model with unbalanced growth and show that a land price bubble necessarily emerges when we consider the structural transformation from a Malthusian economy where land plays an important role as a factor for production, to a modern economy where the role of land as a factor of production diminishes and technological innovations drive economic growth. \citet{HiranoTodaHousingbubble} study an OLG model with perfect housing and rental markets (so housing is an asset with endogenous dividends, unlike pure bubble models) and show that during the process of economic development with unbalanced growth, including easier access to credit, housing bubbles will necessarily emerge. In both papers, \eqref{eq:necessity} is satisfied. These papers show that land and housing price bubbles inevitably emerge with economic and financial development.

The Bubble Necessity Theorem has two important implications. First, in canonical modern macro-finance models, it seems that there is a presupposition that asset prices \emph{must} reflect fundamentals, and indeed models are often constructed in such a way by imposing stationarity. Contrary to this view, the Bubble Necessity Theorem implies that asset prices \emph{cannot} equal fundamental values under some circumstances. In other words, there are benchmark cases in which the notion that asset prices should reflect fundamentals is false.\footnote{As noted in the main text, the concept of the \emph{necessity} of asset price bubbles is fundamentally different from the \emph{possibility} of bubbles in pure bubble models. In the former case, bubbles are inevitable for the existence of equilibrium. Hence, given the same parameter values, we cannot compare economies with and without bubbles and we need to change the conventional way of thinking about welfare and policy implications of bubbles. \citet{Barlevy2018} discusses welfare and policy implications in pure bubble models.} Second, asset pricing implications under balanced growth and unbalanced growth are markedly different. In the next section, we present a simple model to illustrate this idea.

\section{Bare-bones infinite-horizon model}\label{sec:bare-bones}

So far, most of our discussion has centered around OLG models, which are arguably stylized. The purpose of this section is to present a bare-bones infinite-horizon (Bewley-type) model that generates an asset price bubble attached to a dividend-paying asset. This model illustrates the idea that technological progress and structural transformation from a land economy to a modern economy cause unbalanced growth associated with land price bubbles. The model we present here is a much simplified version of \citet*{HiranoJinnaiTodaLeverage}. 

\subsection{Model}

We consider a two-sector economy with a capital-intensive sector and a real estate sector. There is one homogeneous good and a mass 1 continuum of economic agents.

\paragraph{Preferences}

A typical agent has the logarithmic utility function
\begin{equation}
    \E_0\sum_{t=0}^{\infty}\beta^t\log c_{it}, \label{eq:utility}
\end{equation}
where $\E_t[\cdot]$ denotes the expectation conditional on date $t$ information, $\beta\in(0,1)$ is the subjective discount factor, $i\in I=[0,1]$ indexes agents, and $c_{it}$ is consumption of agent $i$ at date $t$. 

\paragraph{Investment technologies}

At each date $t$, an investment opportunity arrives to each agent with probability $\pi\in (0,1)$ independently across time and agents. We call an agent with (without) an investment opportunity an entrepreneur (saver).

A unit of investment in period $t$ produces a unit of capital in period $t$, which is available for production in period $t+1$. A unit of capital produces $A>0$ units of consumption goods, so $A$ captures productivity of capital. After production, a fraction $\delta\in [0,1]$ of capital depreciates. Hence, the gross return on capital is $A+1-\delta$.

\paragraph{Real estate sector}

A unit of land (real estate) yields $D>0$ units of consumption goods as dividends (land rents) every period. We assume that the aggregate supply of land is fixed at $X>0$. More generally, we can consider a neoclassical (constant-returns-to-scale) production function. For instance, we may let the output take the constant elasticity of substitution form
\begin{equation*}
    Y_t=\left(\alpha(A_t^{K}K_t)^{1-1/\sigma}+(1-\alpha)(A_t^{X}X)^{1-1/\sigma}\right)^{\frac{1}{1-1/\sigma}},
\end{equation*}
where $\alpha\in (0,1)$ is a parameter, $\sigma\in (0,\infty]$ is the elasticity of substitution, and $A_t^K$ and $A_t^X$ are productivity of capital and land at date $t$. The bare-bones model is the special case of $\sigma=\infty$ without productivity growth, \ie, $A_t^K=A^K$ and $A_t^X=A^X$ with $A\coloneqq \alpha A^K$ and $D\coloneqq (1-\alpha)A^X$. The case $\sigma=\infty$ is convenient because it gives closed-form solutions and clearly illustrates the mechanism that generates land price bubbles.\footnote{\citet*{HiranoJinnaiTodaLeverage} show that the qualitative results are robust to the specification of the production function.}

\paragraph{Budget constraint}

We suppose that agents cannot borrow and hence self-finance. Let $k_{it},x_{it}\ge 0$ denote the capital and land holdings of agent $i$ and $P_t>0$ be the land price at date $t$. The budget constraint is
\begin{equation}
    c_{it}+k_{i,t+1}+P_tx_{it}=(A+1-\delta)k_{it}+(P_t+D)x_{i,t-1}. \label{eq:fof}
\end{equation}
It is convenient to define the beginning-of-period wealth by the right-hand side
\begin{equation}
    w_{it}\coloneqq (A+1-\delta)k_{it}+(P_t+D)x_{i,t-1}. \label{eq:wt}
\end{equation}
Because land is a safe asset, we may define the gross risk-free rate between time $t$ and $t+1$ by the return on land
\begin{equation}
    R_t\coloneqq \frac{P_{t+1}+D}{P_t}. \label{eq:riskfree}
\end{equation}

\paragraph{Equilibrium}
The economy starts at $t=0$ with some initial distribution of capital and land holdings $(k_{i0},x_{i,-1})_{i\in I}$. A \emph{rational expectations equilibrium} consists of a sequence of land prices $\set{P_t}_{t=0}^\infty$ and a stochastic process of individual choice variables $\set{(c_{it},k_{i,t+1},x_{it})_{i\in I}}_{t=0}^\infty$ such that
\begin{enumerate}
    \item each agent $i$ maximizes utility \eqref{eq:utility} subject to the budget constraint \eqref{eq:fof},
    \item the land market clears: $\int_I x_{it}\diff i=X$.
\end{enumerate}
Note that the sequence of land prices $\set{P_t}_{t=0}^\infty$ is deterministic because the economy features only idiosyncratic risk, which is diversified away at the aggregate level.

In any equilibrium, we may define aggregate consumption, capital stock, and wealth at date $t$ by $C_t\coloneqq \int_Ic_{it}\diff i$, $K_t\coloneqq \int_Ik_{it}\diff i$, and $W_t\coloneqq \int_I w_{it}\diff i$. We say that a rational expectations equilibrium associated with aggregate quantities $\set{(P_t,W_t,C_t,K_{t+1})}_{t=0}^\infty$ is a \emph{steady state} if the aggregate quantities are constant over time.

\subsection{Equilibrium analysis}
We characterize the equilibrium dynamics. Aggregating the individual budget constraint \eqref{eq:fof} and using the land market clearing condition $\int_I x_{it}\diff i=X$ yields the good market clearing condition
\begin{equation}
    C_t+K_{t+1}+P_tX=W_t\coloneqq (A+1-\delta)K_t+(P_t+D)X. \label{eq:Wt}
\end{equation}
Since the utility function is logarithmic, as is well known, the optimal consumption rule is $c_t=(1-\beta)w_t$. Aggregating across agents, we obtain
\begin{equation}
    C_t=(1-\beta)W_t. \label{eq:Ct}
\end{equation}
Combining \eqref{eq:Wt} and \eqref{eq:Ct}, we obtain aggregate savings
\begin{equation}
    K_{t+1}+P_tX=\beta W_t. \label{eq:savings}
\end{equation}
Let $\phi_t\ge 0$ be the fraction of aggregate savings flowing to capital, so
\begin{equation}
    (K_{t+1},P_tX)=(\beta \phi_tW_t,\beta(1-\phi_t)W_t). \label{eq:phit}
\end{equation}
Because there is no borrowing/lending and the fraction of entrepreneurs is $\pi\in (0,1)$, we must have $\phi_t\in [0,\pi]$.

Combining \eqref{eq:Wt}, \eqref{eq:Ct}, and \eqref{eq:phit}, we obtain the aggregate wealth dynamics
\begin{equation}
    (1-\beta+\beta\phi_t)W_t=\beta (A+1-\delta)\phi_{t-1} W_{t-1}+DX. \label{eq:W_dynamics}
\end{equation}
Using \eqref{eq:phit} to eliminate $W_t$ from \eqref{eq:W_dynamics}, we obtain the land price dynamics
\begin{equation}
    P_t=\frac{\beta(A+1-\delta)(1-\phi_t)}{1-\beta+\beta\phi_t}\frac{\phi_{t-1}}{1-\phi_{t-1}}P_{t-1}+\frac{\beta(1-\phi_t)}{1-\beta+\beta\phi_t}D. \label{eq:P_dynamics}
\end{equation}
Note that \eqref{eq:P_dynamics} also determines the dynamics of the price-rent ratio defined by $P_t/D$. Using \eqref{eq:riskfree} and \eqref{eq:P_dynamics}, the gross risk-free rate is
\begin{equation}
    R_t=\frac{\beta(A+1-\delta)(1-\phi_{t+1})}{1-\beta+\beta\phi_{t+1}}\frac{\phi_t}{1-\phi_t}+\frac{1}{1-\beta+\beta\phi_{t+1}}\frac{D}{P_t}. \label{eq:Rt}
\end{equation}
Because entrepreneurs can choose between capital and land investment, an arbitrage argument implies
\begin{equation}
    \phi_t\begin{cases*}
        =0 & if $R_t>A+1-\delta$,\\
        \in [0,\pi] & if $R_t=A+1-\delta$,\\
        =\pi & if $R_t<A+1-\delta$.
    \end{cases*}\label{eq:arbitrage}
\end{equation}

Interestingly, our model features both static and dynamic multiplier effects (positive feedback loops) between asset prices and the aggregate economy. To see why, by \eqref{eq:Wt} an increase in land price $P_t$ raises current wealth $W_t$, which increases the land price further through increasing aggregate savings \eqref{eq:phit}. This is the static multiplier effect. But an increase in the current wealth $W_t$ leads to high investment $K_{t+1}$ through \eqref{eq:phit}, which also increases the next period's wealth $W_{t+1}$ through \eqref{eq:Wt}. This is the dynamic multiplier effect. The presence of a positive feedback loop suggests that the dynamics of the model could qualitatively change depending on the level of productivity $A$. The following proposition provides a complete characterization of steady states.

\begin{prop}[Steady state]\label{prop:steady}
Define the thresholds for productivity
\begin{equation}
    \ubar{A}\coloneqq \frac{1-\beta}{\beta}+\delta<\frac{1-\beta}{\beta\pi}+\delta \eqqcolon \bar{A}. \label{eq:Abar}
\end{equation}
Then the following statements are true.
\begin{enumerate}
    \item\label{item:steady_L} If $A\le \ubar{A}$, the unique steady state gross risk-free rate is $R\coloneqq 1/\beta$ and land price equals the fundamental value $P=\frac{D}{R-1}$. If $A<\ubar{A}$, there is no capital investment: $\phi=0$. If $A=\ubar{A}$, $\phi\in [0,\pi]$ is indeterminate.
    \item\label{item:steady_M} If $\ubar{A}<A<\bar{A}$, the unique steady state gross risk-free rate is
    \begin{equation}
        R=\frac{1-\beta\pi(A+1-\delta)}{\beta(1-\pi)}, \label{eq:Rf}
    \end{equation}
    land price equals the fundamental value $P=\frac{D}{R-1}$, and there is full capital investment: $\phi=\pi$.
    \item\label{item:steady_H} If $A\ge \bar{A}$, there exist no steady states.
\end{enumerate}
\end{prop}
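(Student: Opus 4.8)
The plan is to impose stationarity on the aggregate equilibrium conditions \eqref{eq:W_dynamics}, \eqref{eq:phit}, \eqref{eq:riskfree}, and \eqref{eq:arbitrage}, reduce them to a single scalar equation relating the gross risk-free rate $R$, the saving share $\phi$, and productivity $A$, and then exhaust the three branches of the arbitrage correspondence \eqref{eq:arbitrage}.

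First I would set $W_t\equiv W$ and $\phi_t\equiv\phi$ in \eqref{eq:W_dynamics}, which gives $W[1-\beta-\beta\phi(A-\delta)]=DX$; this has a positive solution for $W$ if and only if $1-\beta-\beta\phi(A-\delta)>0$. Combining the stationary forms of \eqref{eq:phit} ($PX=\beta(1-\phi)W$) and \eqref{eq:riskfree} ($R=1+D/P$) and eliminating $W$ yields the key relation
\begin{equation*}
    R=\frac{1-\beta\phi(A+1-\delta)}{\beta(1-\phi)},
\end{equation*}
legitimate since $\phi\le\pi<1$. A one-line computation shows that $R>1$ is equivalent to $1-\beta-\beta\phi(A-\delta)>0$, so $W>0$, $P>0$, and $R>1$ are one and the same requirement; moreover, with $P$ and $D$ constant the no-arbitrage condition \eqref{eq:noarbitrage} forces the Arrow-Debreu prices $q_t=R^{-t}$, so the fundamental value is $\sum_{t\ge 1}R^{-t}D=D/(R-1)$ and every stationary land price automatically equals its fundamental value (hence, consistently with Lemma \ref{lem:bubble}, no steady state carries a bubble).

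Next I would run the case split from \eqref{eq:arbitrage}. The branch $\phi=0$ gives $R=1/\beta$ and $P=\beta D/(1-\beta)$, and is arbitrage-consistent if and only if $R\ge A+1-\delta$, i.e.\ $A\le\ubar{A}$. An interior branch $\phi\in(0,\pi)$ forces $R=A+1-\delta$; substituting this into the key relation collapses it to $\beta(A+1-\delta)=1$, i.e.\ $A=\ubar{A}$ exactly, with $\phi$ undetermined. The branch $\phi=\pi$ gives $R$ as in \eqref{eq:Rf}, requires $W>0$ (i.e.\ $A<\bar{A}$), and is arbitrage-consistent if and only if $R\le A+1-\delta$, which reduces to $\beta(A+1-\delta)\ge 1$, i.e.\ $A\ge\ubar{A}$. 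Assembling: for $A<\ubar{A}$ only $\phi=0$ survives and the steady state is unique with $R=1/\beta$; at $A=\ubar{A}$ every $\phi\in[0,\pi]$ is admissible, yet all of them deliver $R=1/\beta$ and the same $P=\beta D/(1-\beta)$, so $R$ and $P$ are pinned down while $\phi$ (and hence $W,C,K$) is not; for $\ubar{A}<A<\bar{A}$ only $\phi=\pi$ survives, giving \eqref{eq:Rf}; and for $A\ge\bar{A}$ the $\phi=0$ and interior branches are excluded by $A>\ubar{A}$, while the $\phi=\pi$ branch dies because $1-\beta-\beta\pi(A-\delta)\le 0$ leaves $W\cdot 0=DX>0$ unsolvable, so no steady state exists. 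Necessity of these characterizations is immediate from the reduction; for sufficiency one reconstructs $W>0$, then $P,C,K$, from an admissible $\phi$ and checks that the equilibrium conditions hold by construction. This is exactly items \ref{item:steady_L}--\ref{item:steady_H}.

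The arithmetic throughout is elementary, so the only real care is at the two knife-edge values of $A$. At $A=\ubar{A}$ the rule \eqref{eq:arbitrage} is a genuine correspondence, and I would need to check explicitly that, although $\phi$ is free, the key relation still forces $R=1/\beta$ and $P=\beta D/(1-\beta)$ for every admissible $\phi$ (in particular for $\phi=\pi$, where \eqref{eq:Rf} itself evaluates to $1/\beta$). At $A=\bar{A}$ the coefficient of $W$ in the stationary wealth equation vanishes, which is precisely what turns the would-be $\phi=\pi$ steady state into nonexistence rather than a degenerate object. I expect this boundary bookkeeping, not any substantive argument, to be the place where care is needed.
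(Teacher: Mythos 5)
Your proposal is correct and follows essentially the same route as the paper: both reduce the steady state to the single relation $R=\frac{1-\beta\phi(A+1-\delta)}{\beta(1-\phi)}$ (the paper's \eqref{eq:R_steady}) and then exhaust the arbitrage condition \eqref{eq:arbitrage}, the only organizational difference being that you enumerate the three branches in $\phi$ while the paper splits on $A$ via the $\phi$-independent equivalence $R\le A+1-\delta\iff\beta(A+1-\delta)\ge 1$. Your explicit verification that $q_t=R^{-t}$ makes $D/(R-1)$ the fundamental value is a small welcome addition not spelled out in the paper's proof.
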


The case \ref{item:steady_L} with $A\le \ubar{A}$ is less interesting because only the real estate sector is active. In what follows, we focus on the case $A>\ubar{A}$. The following lemma shows that the capital-intensive sector is always active.

\begin{lem}\label{lem:invest}
If $A>\ubar{A}$, then $\phi_t>0$ for all $t$ in all equilibria.
\end{lem}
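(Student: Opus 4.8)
The plan is to argue by contradiction: suppose that in some equilibrium $\phi_{t_0}=0$ at a date $t_0$, and derive a violation of the arbitrage condition \eqref{eq:arbitrage}. The key observation is that \eqref{eq:arbitrage} makes $\phi_{t_0}=0$ possible only if the risk-free return on land weakly exceeds the return on capital, i.e.\ $R_{t_0}\ge A+1-\delta$; so it suffices to show that shutting down capital investment today forces $R_{t_0}\le 1/\beta$, which is strictly below $A+1-\delta$ precisely because $A>\ubar A=\tfrac{1-\beta}{\beta}+\delta$ (equivalently $A+1-\delta>\tfrac1\beta$). This is a one-step argument: one need not assume $\phi=0$ persists.

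First I would note that $\phi_{t_0}=0$ together with \eqref{eq:phit} gives $K_{t_0+1}=\beta\phi_{t_0}W_{t_0}=0$, so by \eqref{eq:Wt} the date-$(t_0+1)$ wealth comes entirely from land: $W_{t_0+1}=(P_{t_0+1}+D)X$. Next, applying \eqref{eq:phit} at date $t_0+1$ and using $\phi_{t_0+1}\ge 0$, $P_{t_0+1}X=\beta(1-\phi_{t_0+1})W_{t_0+1}\le\beta W_{t_0+1}=\beta(P_{t_0+1}+D)X$, hence $P_{t_0+1}\le\frac{\beta D}{1-\beta}$. Symmetrically, applying \eqref{eq:phit} at date $t_0$ with $\phi_{t_0}=0$ and using $W_{t_0}=(A+1-\delta)K_{t_0}+(P_{t_0}+D)X\ge(P_{t_0}+D)X$, one gets $P_{t_0}X=\beta W_{t_0}\ge\beta(P_{t_0}+D)X$, hence $P_{t_0}\ge\frac{\beta D}{1-\beta}$. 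Feeding these two bounds into the definition \eqref{eq:riskfree} of the risk-free rate yields $R_{t_0}=\frac{P_{t_0+1}+D}{P_{t_0}}\le\frac{\beta D/(1-\beta)+D}{\beta D/(1-\beta)}=\frac1\beta<A+1-\delta$. But by \eqref{eq:arbitrage} the strict inequality $R_{t_0}<A+1-\delta$ forces $\phi_{t_0}=\pi>0$, contradicting $\phi_{t_0}=0$. Since $t_0$ was arbitrary, $\phi_t>0$ for all $t$.

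There is no serious obstacle here; the only points that require care are routine bookkeeping: all four ingredients used—\eqref{eq:Wt}, \eqref{eq:phit}, \eqref{eq:riskfree}, \eqref{eq:arbitrage}—hold in every equilibrium (not just steady states), and the inequality chain goes through only because $P_t>0$, $K_t\ge 0$, $X,D>0$, and $\phi_t\in[0,\pi]\subseteq[0,1]$, which the model guarantees. The conceptual content worth emphasizing in the write-up is that $\frac{\beta D}{1-\beta}$ is exactly the ``pure-land'' (no-capital) valuation of land, carrying return $1/\beta$: once capital is switched off today, tomorrow's land price cannot exceed this level while today's cannot fall below it, so land is dominated by capital whenever $A>\ubar A$, and hence some savings must flow to capital.
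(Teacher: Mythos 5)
Your proof is correct and follows essentially the same route as the paper's: argue by contradiction from $\phi_{t_0}=0$, derive the lower bound $P_{t_0}\ge \beta D/(1-\beta)$, conclude $R_{t_0}\le 1/\beta<A+1-\delta$, and invoke \eqref{eq:arbitrage} to force $\phi_{t_0}=\pi$. The only cosmetic difference is that you rederive the needed bounds from the primitives \eqref{eq:Wt}, \eqref{eq:phit}, and \eqref{eq:riskfree}, whereas the paper reads them off the already-established dynamics \eqref{eq:P_dynamics} and \eqref{eq:Rt}.
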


We next establish the existence of equilibrium and provide a characterization.

\begin{prop}\label{prop:exist}
If $A>\ubar{A}$, for any initial aggregate capital $K_0\ge 0$, there exists an equilibrium with eventual full capital investment: $\phi_t=\pi$ for all sufficiently large $t$. Consequently, for large enough $t$ the land price dynamics is given by
\begin{equation}
    P_t=\frac{\beta\pi(A+1-\delta)}{1-\beta+\beta\pi}P_{t-1}+\frac{\beta(1-\pi)}{1-\beta+\beta\pi}D. \label{eq:P_dynamics2}
\end{equation}
Furthermore, the following statements are true.
\begin{enumerate}
    \item\label{item:AL} If $A<\bar{A}$, then $R_t$ converges to $R_f=R$ in \eqref{eq:Rf} and $P_t$ converges to
    \begin{equation}
        P\coloneqq \frac{\beta(1-\pi)}{1-\beta-\beta\pi(A-\delta)}D=\frac{D}{R_f-1}. \label{eq:Pf}
    \end{equation}
    \item\label{item:AH} If $A\ge \bar{A}$, then $R_t$ converges to
    \begin{equation}
        R_b\coloneqq \frac{\beta\pi(A+1-\delta)}{1-\beta+\beta\pi}\ge 1 \label{eq:Rb}
    \end{equation}
    and $P_t$ diverges to $\infty$.
\end{enumerate}
\end{prop}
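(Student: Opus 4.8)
The plan is to reduce the statement to the single claim that $\phi_t=\pi$ for all large $t$; everything else then follows from elementary analysis of a scalar affine recursion. Concretely, substituting $\phi_{t-1}=\phi_t=\pi$ into \eqref{eq:P_dynamics} produces exactly \eqref{eq:P_dynamics2}, i.e.\ $P_t=R_bP_{t-1}+c$ with $c\coloneqq\frac{\beta(1-\pi)}{1-\beta+\beta\pi}D>0$ and slope the constant $R_b$ of \eqref{eq:Rb}. A one-line computation gives $R_b<1\iff A<\bar A$ and identifies the fixed point $c/(1-R_b)$ with $P$ of \eqref{eq:Pf} (and with $D/(R_f-1)$). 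Hence if $A<\bar A$ the recursion is a contraction, so $P_t\to P$ and $R_t=(P_{t+1}+D)/P_t\to (P+D)/P=R_f$; if $A\ge\bar A$ then $R_b\ge1$, the fictitious fixed point is non‑positive or infinite, $P_t$ is eventually increasing and $P_t\to\infty$, and dividing \eqref{eq:P_dynamics2} by $P_{t-1}$ yields $R_{t-1}=P_t/P_{t-1}+D/P_{t-1}\to R_b$. This disposes of parts (i)–(ii) once eventual full investment is established.

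For the construction, observe first that Lemma \ref{lem:invest} and the hypothesis $A>\ubar A$ force $\phi_t>0$ at every date, so by the arbitrage condition \eqref{eq:arbitrage} one always has $R_t\le A+1-\delta$; thus the only remaining requirement for a period with $\phi_t=\pi$ is $R_t\le A+1-\delta$, while any period with $0<\phi_t<\pi$ must have $R_t=A+1-\delta$. I would try the candidate $\phi_t\equiv\pi$: \eqref{eq:W_dynamics} is then an affine recursion for $W_t$ with slope $R_b$ and initial value $W_0=\frac{(A+1-\delta)K_0+DX}{1-\beta+\beta\pi}$, \eqref{eq:phit} gives $P_t=\tfrac{\beta(1-\pi)}{X}W_t$, and \eqref{eq:Rt} collapses to $R_t=R_b+\frac{c+D}{P_t}$. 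Since the $W_t$-recursion is affine with positive slope, $W_t$ and $P_t$ are monotone and $R_t$ is monotone with limit $R_f$ or $R_b$, each strictly below $A+1-\delta$ — the point where $A>\ubar A$ is used, via the equivalence $R_f<A+1-\delta\iff A>\ubar A$. Hence $\{t:R_t>A+1-\delta\}$ evaluated along this candidate is a finite initial segment, empty precisely when $K_0$ is large, in which case $\phi_t\equiv\pi$ is already the desired equilibrium.

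When that segment is nonempty — $K_0$ small, so land is cheap relative to its rent and temporarily out‑earns capital — a transition must be built. The structural fact I would exploit is that imposing the indifference relation $R_t=A+1-\delta$ together with \eqref{eq:W_dynamics} forces $W_{t+1}=\beta(A+1-\delta)W_t$ and $P_{t+1}=(A+1-\delta)P_t-D$, so along such a constrained phase aggregate wealth grows at the gross rate $\beta(A+1-\delta)>1$, the land price moves autonomously, and $\phi_t=1-\frac{P_tX}{\beta W_t}$ is increasing exactly while $P_t$ stays below the level $\frac{D}{(A+1-\delta)(1-\beta)}$ — the same level at which $R\le A+1-\delta$ first becomes compatible with $\phi=\pi$ in the full‑investment regime. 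I would then run a shooting argument in the single free parameter $P_0$, which through $W_0=(A+1-\delta)K_0+(P_0+D)X$ also fixes $\phi_0$: too small a $P_0$ either drives $P_t$ negative in finite time or pushes $\phi_0$ above $\pi$, too large a $P_0$ makes $\phi_0$ negative, and for an intermediate value the constrained path keeps $\phi_t\in(0,\pi)$ and $P_t>0$ until $P_t$ reaches the threshold, at which date it docks consistently onto the $\phi\equiv\pi$ regime; a continuity/intermediate‑value argument in $P_0$, matching the consistency condition at the switch date (which links $\phi_{T-1}$ and $P_{T-1}$ through \eqref{eq:P_dynamics}), produces an admissible $P_0$, and monotonicity of $R_t$ in the subsequent $\phi\equiv\pi$ phase (shown in the previous paragraph) keeps $R_t\le A+1-\delta$ thereafter.

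I expect this last step — isolating the correct initial land price and verifying that the constrained phase stays feasible ($0<\phi_t<\pi$, $P_t>0$) and glues consistently onto the full‑investment regime — to be the main obstacle; by contrast, everything after the economy enters the $\phi\equiv\pi$ regime is the routine contraction‑versus‑divergence dichotomy for the affine recursion \eqref{eq:P_dynamics2} described in the first paragraph.
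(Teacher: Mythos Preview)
Your proposal is correct and structurally mirrors the paper's proof: both decompose the equilibrium into an initial constrained phase with $R_t=A+1-\delta$ (along which $W_{t+1}=\beta(A+1-\delta)W_t$ and $P_{t+1}=(A+1-\delta)P_t-D$) followed by a full-investment phase $\phi_t\equiv\pi$, and both reduce the asymptotic statements \ref{item:AL}--\ref{item:AH} to the contraction/divergence dichotomy for the affine recursion \eqref{eq:P_dynamics2}. Your identification of the docking threshold $P=\tfrac{D}{(A+1-\delta)(1-\beta)}$ is exactly the content of the paper's Lemma~\ref{lem:W0}, rewritten in price rather than wealth coordinates.

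The one genuine difference is the direction of the construction. You shoot \emph{forward} in the free parameter $P_0$ and appeal to an intermediate-value argument to land on a feasible path that reaches the threshold in finite time; the paper instead fixes a target wealth $W_0$ at the switch date in the range $[\bar W_0,\beta(A+1-\delta)\bar W_0)$, constructs \emph{backward} via $W_{-j}=(\beta(A+1-\delta))^{-j}W_0$ together with an explicit closed-form for $1-\phi_{-j}$, and then solves for the index $j$ and the precise $W_0$ that match the given initial capital. The backward route buys explicit formulas and avoids the IVT, at the cost of bookkeeping when matching the initial condition; your forward shooting is more transparent about what the free parameter is, but leaves the feasibility-and-gluing step as a nonconstructive existence argument. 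The paper itself concedes that the full matching step is ``tedious'' and only sketches it, so on this point neither argument is materially more complete than the other.
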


\subsection{Phase transition from balanced to unbalanced growth} 
In the bare-bones model, productivity $A$ can be interpreted as a parameter governing the degree of technological innovation in the manufacturing sector.\footnote{An increase in $A$ can be interpreted in two ways, \ie, product innovation and process innovation. If we interpret an increase in $A$ as an improvement in the quality of existing capital, it is product innovation. Another is to consider a situation in which $z$ captures the degree of efficiency in creating capital, and a unit of investment produces $z>1$ units of capital. This innovation can be interpreted as process innovation, in which case $z$ is mathematically included in $A$. In either cases, $A$ can be seen as the degree of technological innovation.} Propositions \ref{prop:steady} and \ref{prop:exist} show that our model can put the structural transformation in the economy in a historical context. When productivity is low ($A<\ubar{A}$), only the land sector is active and the economy is characterized by \emph{balanced growth}, where both the land price and rent grow at the same rate (they are constant). This economy can be interpreted as a Malthusian economy where land plays a major role for production. Once productivity exceeds the first threshold $\ubar{A}$, the first phase transition occurs and the manufacturing sector arises, which could be interpreted as Industrial Revolution or the birth of a modern economy where capital and knowledge play important roles for production. As long as productivity is not too high ($A<\bar{A}$), the land and manufacturing sectors coexist and the economy is still characterized by balanced growth. However, when productivity increases further and exceeds $\bar{A}$, the second phase transition occurs. The manufacturing sector starts growing at a faster rate than the land sector and the economy is characterized by \emph{unbalanced growth}. Once the economy enters this stage, land prices, pulled up by the growing manufacturing sector, rise at a faster rate than rents.

The following proposition shows that the asset pricing implications in worlds of balanced and unbalanced growth are markedly different.

\begin{prop}[Asset pricing]\label{prop:bubble_char}
Consider the equilibria in Propositions \ref{prop:steady} and \ref{prop:exist} with land price $P_t$ and fundamental value $V_t$. Then the following statements are true.
\begin{enumerate}
    \item\label{item:A<} If $A<\bar{A}$, there is no bubble: $P_t=V_t$ for all $t$ and the price-rent ratio $P_t/D$ converges.
    \item\label{item:A=} If $A=\bar{A}$, there is no bubble: $P_t=V_t$ for all $t$ and the price-rent ratio $P_t/D$ linearly diverges to $\infty$.
    \item\label{item:A>} If $A>\bar{A}$, there is a bubble: $P_t>V_t$ for all $t$ and the price-rent ratio $P_t/D$ exponentially diverges to $\infty$.
\end{enumerate}
\end{prop}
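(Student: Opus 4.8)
The plan is to reduce everything to the Bubble Characterization Lemma~\ref{lem:bubble} together with the deterministic dichotomy following \eqref{eq:Pt}. Since $D_t\equiv D>0$ and $P_t>0$ in every equilibrium, Lemma~\ref{lem:bubble} says the equilibrium has a bubble if and only if $\sum_{t\ge1}D_t/P_t=D\sum_{t\ge1}1/P_t<\infty$, and by the remark after \eqref{eq:Pt} absence (respectively presence) of a bubble forces $P_t=V_t$ (respectively $P_t>V_t$) for \emph{all} $t$. So it suffices to (a) determine the asymptotic behaviour of $P_t$, hence of the price-rent ratio $P_t/D$, and (b) decide convergence of $\sum1/P_t$; the ``for all $t$'' part of each claim then comes for free from the dichotomy.

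First I would describe $P_t$ for large $t$ using Propositions~\ref{prop:steady} and \ref{prop:exist}. For the steady states with $A\le\ubar{A}$ (Proposition~\ref{prop:steady}\ref{item:steady_L}) and $\ubar{A}<A<\bar{A}$ (Proposition~\ref{prop:steady}\ref{item:steady_M}) the price is literally a positive constant $P=D/(R-1)$, so $P_t/D$ converges and $1/P_t$ is bounded away from $0$. For the equilibria of Proposition~\ref{prop:exist} we have $\phi_t=\pi$ for all large $t$, so from some date $t_0$ on $P_t$ obeys the affine recursion \eqref{eq:P_dynamics2}, $P_t=\rho P_{t-1}+\kappa$, with $\rho\coloneqq\beta\pi(A+1-\delta)/(1-\beta+\beta\pi)=R_b$ and $\kappa\coloneqq\beta(1-\pi)D/(1-\beta+\beta\pi)>0$. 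A one-line computation substituting $A=\bar{A}=(1-\beta)/(\beta\pi)+\delta$ gives $\rho=1$ exactly at $A=\bar{A}$, hence $\rho<1$ for $A<\bar{A}$ and $\rho>1$ for $A>\bar{A}$. The explicit solution is $P_t=\rho^{t-t_0}(P_{t_0}-P_*)+P_*$ with $P_*=\kappa/(1-\rho)$ when $\rho\neq1$, and $P_t=P_{t_0}+(t-t_0)\kappa$ when $\rho=1$.

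Next I would read off the three regimes. If $A<\bar{A}$: $\rho<1$, so $P_*>0$ and $P_t\to P_*$ (consistent with Proposition~\ref{prop:exist}\ref{item:AL}); thus $P_t/D$ converges and $1/P_t\to1/P_*>0$, so $\sum1/P_t=\infty$ and there is no bubble. If $A=\bar{A}$: $\rho=1$, so $P_t$ grows arithmetically, $P_t/D$ diverges linearly, and $\sum1/P_t$ diverges like the harmonic series, so again no bubble. If $A>\bar{A}$: $\rho>1$ and $P_*<0$, so $P_{t_0}-P_*>0$ automatically (indeed $P_t\to\infty$ by Proposition~\ref{prop:exist}\ref{item:AH}), hence $P_t$ grows geometrically at rate $\rho>1$; then $P_t/D$ diverges exponentially and $\sum1/P_t$ is a convergent geometric tail, so there is a bubble. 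Applying Lemma~\ref{lem:bubble} and the dichotomy after \eqref{eq:Pt} in each case yields $P_t=V_t$ for all $t$ in parts~\ref{item:A<} and \ref{item:A=}, and $P_t>V_t$ for all $t$ in part~\ref{item:A>}.

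I do not anticipate a serious obstacle: the substance is just the asymptotic analysis of a first-order linear difference equation. The only points that need care are (i) checking that the knife-edge $\rho=1$ coincides exactly with $A=\bar{A}$ and that $\operatorname{sign}(\rho-1)=\operatorname{sign}(A-\bar{A})$; (ii) noting that \eqref{eq:P_dynamics2} holds only from some $t_0$ onward, but finitely many positive initial terms affect neither the convergence of $\sum1/P_t$ nor the limiting/growth rate of $P_t/D$; and (iii) verifying that the constant-price steady states of Proposition~\ref{prop:steady} with $A\le\ubar{A}$ (and with $\ubar{A}<A<\bar{A}$) are correctly subsumed in part~\ref{item:A<}.
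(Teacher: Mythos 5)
Your proposal is correct and follows essentially the same route as the paper: reduce the question to the asymptotics of the affine recursion \eqref{eq:P_dynamics2} (whose slope $\rho$ equals $1$ exactly at $A=\bar{A}$) and then apply the Bubble Characterization Lemma \ref{lem:bubble} to $\sum_t D/P_t$ in the three regimes. The extra care you take -- writing out the explicit solution of the recursion, invoking the dichotomy after \eqref{eq:Pt} to get ``for all $t$,'' and noting that finitely many initial terms before $\phi_t=\pi$ are harmless -- is consistent with, and slightly more detailed than, the paper's argument.
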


Note that the bubble existence condition $A>\bar{A}$ in Proposition \ref{prop:bubble_char}\ref{item:A>} is equivalent to the bubble necessity condition \eqref{eq:necessity}. To see why, the bubbleless gross risk-free rate $R$ is given by \eqref{eq:Rf}, the dividend growth rate is $G_d=1$, and using \eqref{eq:P_dynamics2}, the economic growth rate is given by
\begin{equation*}
    G\coloneqq \max\set{1,\frac{\beta\pi(A+1-\delta)}{1-\beta+\beta\pi}}.
\end{equation*}
It is straightforward to verify that the bubble necessity condition $R<G_d<G$ is equivalent to $A>\bar{A}$.

As a numerical example, we set $\pi=0.1$, $\beta=0.95$, $\delta=0.08$, and $D=1$. Figure \ref{fig:dynamics} shows the price dynamics \eqref{eq:P_dynamics}, one case for $A<\bar{A}$ (left) and another for $A>\bar{A}$ (right). It is clear that the strength of the static and dynamic multiplier effect characterized by the slope
\begin{equation*}
    \rho\coloneqq \frac{\beta\pi(A+1-\delta)}{1-\beta+\beta\pi}
\end{equation*}
determines the qualitative behavior of the model. The price sequence $\set{P_t}$ converges or diverges according as $\rho<1$ or $\rho\ge 1$.

\begin{figure}[!htb]
    \centering
    \includegraphics[width=0.48\linewidth]{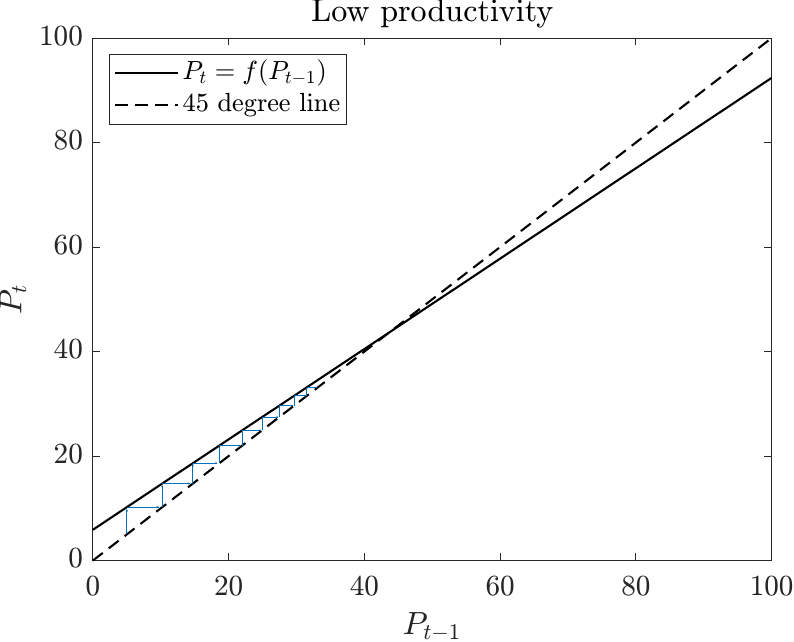}
    \includegraphics[width=0.48\linewidth]{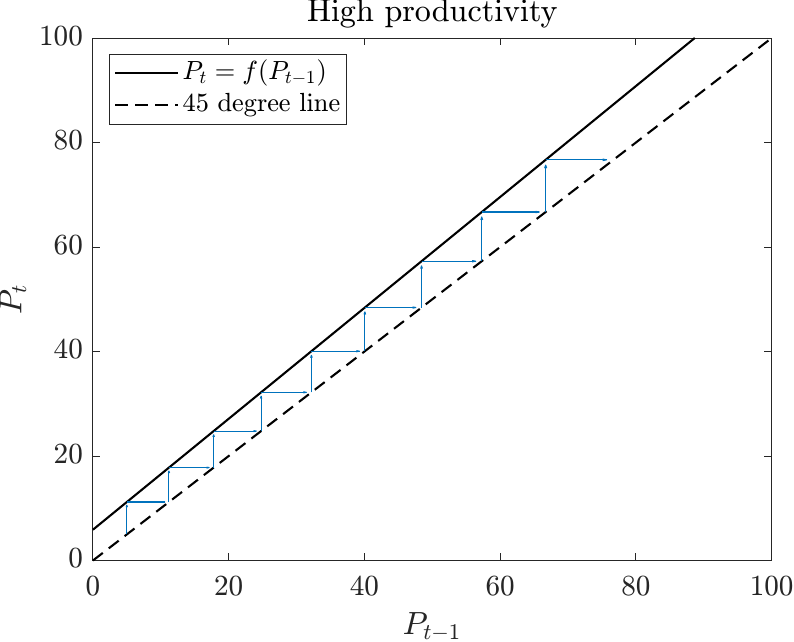}
    \caption{Land price dynamics.}
    \label{fig:dynamics}
    \caption*{\footnotesize Note: The figures show the price dynamics \eqref{eq:P_dynamics}. The parameter values are $\pi=0.1$, $\beta=0.95$, $\delta=0.08$, $D=1$, and $P_0=5$. productivity are either low ($A=0.4$, left panel) or high ($A=0.7$, right panel).}
\end{figure}

Figure \ref{fig:regime} shows the dependence of the long-run interest rate (the fundamental equilibrium interest rate $R_f=1/\beta$ if $A\le \ubar{A}$, $R_f$ in \eqref{eq:Rf} if $A\in (\ubar{A},\bar{A})$, or the bubbly equilibrium interest rate $R_b$ in \eqref{eq:Rb}) on productivity and the equilibrium land price regimes. As we increase productivity in the fundamental regime, the positive feedback loop between the land price and the aggregate economy gets stronger. Then the land price goes up and hence the interest rate goes down. However, because land pays constant rents, for the land price to be finite, the interest rate cannot fall below 1. When productivity exceeds the critical value that makes the interest rate equal to 1, a phase transition from the fundamental regime to the bubbly regime occurs, implying the emergence of land price bubbles under low interest rates. The fundamental regime is characterized by balanced growth, while the bubbly regime is characterized by unbalanced growth.

\begin{figure}[!htb]
    \centering
    \includegraphics[width=0.7\linewidth]{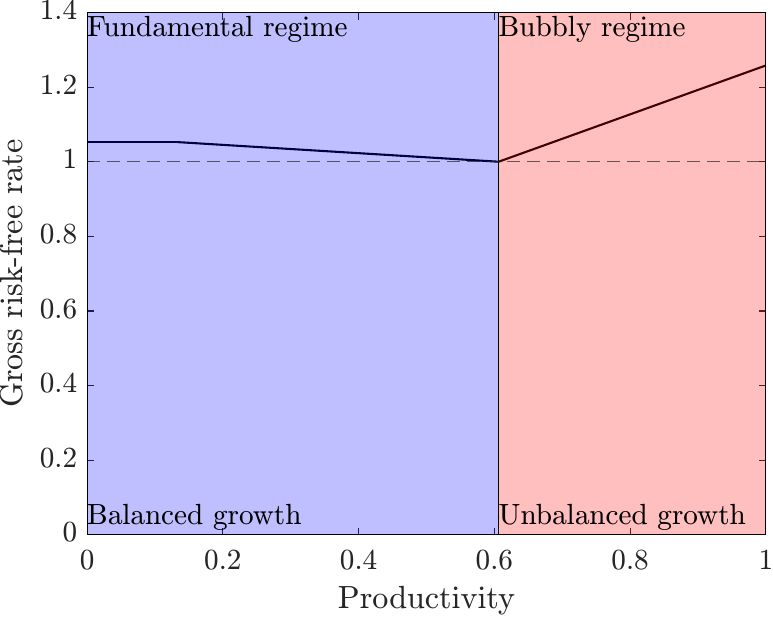}
    \caption{Phase transition of growth and equilibrium land price regimes.}
    \label{fig:regime}
    \caption*{\footnotesize Note: The figure shows the long-run interest rate ($R_f=1/\beta$ if $A\le \ubar{A}$, $R_f$ in \eqref{eq:Rf} if $A\in (\ubar{A},\bar{A})$, or $R_b$ in \eqref{eq:Rb}) against productivity $A$.}
\end{figure}

So far, we have placed land price bubbles in the context of major historical trends involving technological progress, but our analysis also has implications for rises and falls in land prices over a short period of time. If the fundamentals of the model suddenly change in an unexpected way for the agents, a bubble could emerge or collapse in a short period of time. To illustrate this point, Figure \ref{fig:boombust} shows the land price dynamics \eqref{eq:P_dynamics} when productivity unexpectedly changes. More specifically, the economy is initially in the steady state corresponding to $A=0.4$. At $t=1$, productivity unexpectedly increases to either 0.5 (fundamental regime, left) or 0.7 (bubbly regime, right), and at least when this shock happens, agents believe that it will stay so forever. However, at $t=11$, productivity unexpectedly reverts to $A=0.4$ and stays so forever.

\begin{figure}[!htb]
    \centering
    \includegraphics[width=0.48\linewidth]{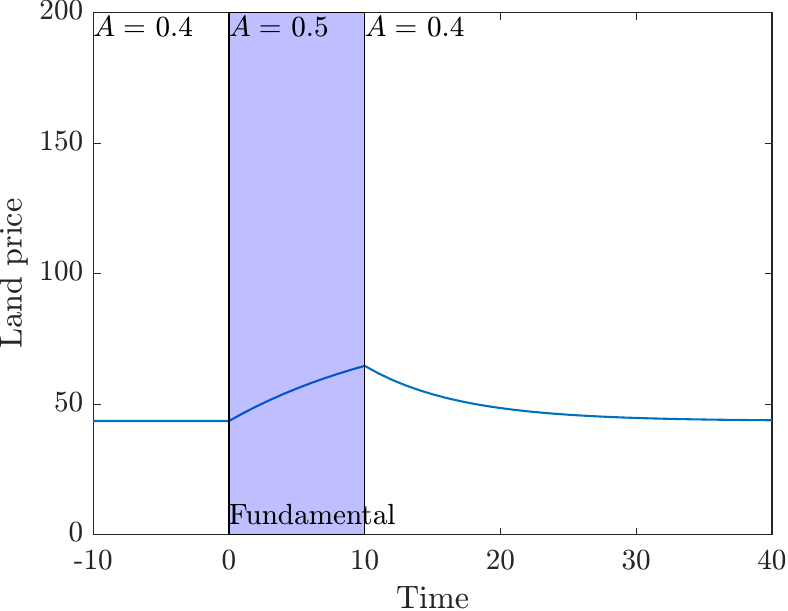}
    \includegraphics[width=0.48\linewidth]{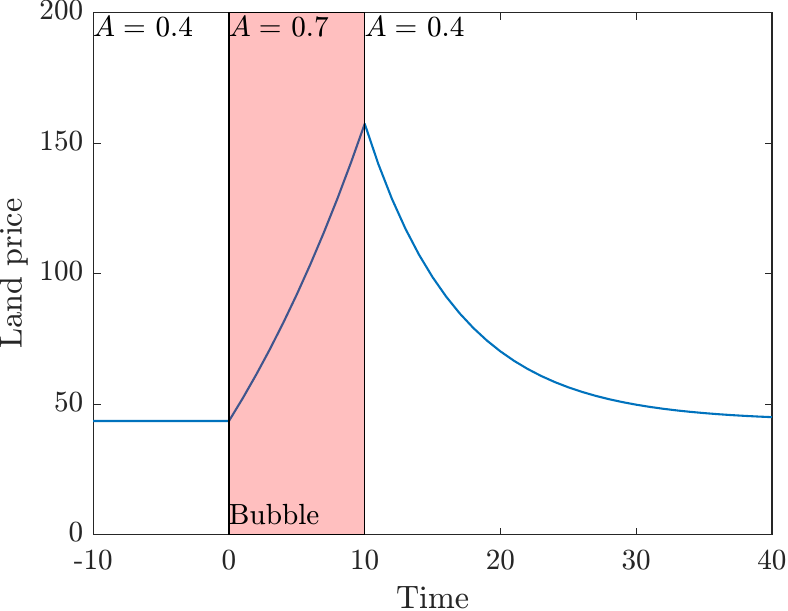}
    \caption{Boom and bust with varying productivity.}
    \label{fig:boombust}
    \caption*{\footnotesize Note: The figure shows the land price dynamics \eqref{eq:P_dynamics} when productivity $A$ unexpectedly changes.}
\end{figure}

In both cases, the land price exhibits a boom-bust cycle driven by the positive feedback loop between land price and the macroeconomy. However, the qualitative behavior is different depending on whether the boom is a bubble or not. In a bubble (right), land price grows exponentially (the price path is convex), whereas in a fundamental boom (left), land price converges to the new steady state (the price path is concave). Because in our model the rent is constant, the price-rent ratio shows the same pattern. This implies that the land price-rent ratio can be used as an indicator of land price bubbles, making it possible to connect our analysis with the bubble detection literature discussed in \S\ref{subsec:topic_criticism}.\footnote{It is possible to extend this bare-bones model to include aggregate risk, \ie, stochastic shocks to aggregate productivity. For instance, \citet{HiranoTodaUnbalanced} establish the Land Overvaluation Theorem in an OLG economy with aggregate risk. They show that land prices continue to fluctuate, always containing a bubble, with the bubble size expanding and contracting recurrently, which appears to be the emergence and collapse of land price bubbles.} 

\subsection{Technological progress and unbalanced growth} \label{sub:tech progress}

In the bare-bones model, we assumed that the productivity growth rate of capital is zero, that is, $A$ is constant. This is a typical assumption in the growth literature to ensure the existence of a steady state or a balanced growth path. Indeed, the well-known Uzawa Balanced Growth Theorem states that capital-augmenting technological progress is inconsistent with balanced growth \citep{Uzawa1961,Schlicht2006,JonesScrimgeour2008}. While standard macroeconomic models using the neoclassical production function are constructed to be consistent with the Uzawa Theorem, the conditions under which balanced growth occurs are known to be fragile.\footnote{\label{footnote:knife edge}Indeed, \citet*[p.~1306]{GrossmanHelpmanOberfieldSampson2017} note ``As with any model that generates balanced growth, knife-edge restrictions
are required to maintain the balance''. An exception to the growth rate restrictions is the Cobb-Douglas production function but it is obviously a knife-edge case with elasticity of substitution being exactly one.} In contrast to the Uzawa Theorem, there is broad empirical consensus that there is also capital-augmenting technical change \citep*{GrossmanHelpmanOberfieldSampson2017,GreenwoodHercowitskrusell1997,CaseyHorii2023}.

Given this observation, we now extend the bare-bones model so that productivity of both capital and land could be time-varying due to technological progress by replacing $A,D$ with $A_t,D_t$. The characterization of the global equilibrium dynamics remains the same, and \eqref{eq:P_dynamics} becomes
\begin{equation*}
    P_t=\frac{\beta\pi(A_t+1-\delta)}{1-\beta+\beta\pi}P_{t-1}+\frac{\beta(1-\pi)}{1-\beta+\beta\pi}D_t.
\end{equation*}
Dividing both sides by $D_t>0$ and letting $G_t\coloneqq D_t/D_{t-1}$ be the rent growth rate (land productivity growth), we obtain the global dynamics of the price-rent ratio
\begin{equation}
    \frac{P_t}{D_t}=\frac{\beta\pi(A_t+1-\delta)}{(1-\beta+\beta\pi)G_t}\frac{P_{t-1}}{D_{t-1}}+\frac{\beta(1-\pi)}{1-\beta+\beta\pi}. \label{eq:PD}
\end{equation}
By \eqref{eq:PD} and Lemma \ref{lem:bubble}, clearly the price-rent ratio will exponentially diverge to infinity and there is a bubble if
\begin{equation}
    \liminf_{t\to\infty} \frac{\beta\pi(A_t+1-\delta)}{(1-\beta+\beta\pi)G_t}>1. \label{eq:bubble_cond}
\end{equation}
For concreteness, suppose $G_t=G$ is constant. Then \eqref{eq:bubble_cond} is equivalent to
\begin{equation}
    \liminf_{t\to\infty} A_t>\frac{1-\beta}{\beta\pi}+G-1+\delta\eqqcolon \bar{A}(G), \label{eq:bubble_condA}
\end{equation}
which generalizes the condition $A>\bar{A}$ in Proposition \ref{prop:bubble_char}. In reality, the productivity of capital, whatever its growth rate, grows due to technological innovations. If capital productivity is expected to eventually exceed the threshold $\bar{A}(G)$, the economy exhibits unbalanced growth and is always in the transition process of the global dynamics associated with land bubbles.\footnote{In the case of unbalanced growth, it is more natural to think of the economy as being in a transition process of global dynamics all along. Indeed, the literature on structural transformation considers so, because as $t\to\infty$, the service sector will almost become dominant and other sectors such as agriculture and manufacturing will asymptotically vanish. In this sense, transitional dynamics would be more relevant, rather than the limiting situation.}

There are two important implications to be drawn from our analysis. First, our analysis provides new insights to the methodology of macroeconomic theory. Economists have long been trained and got accustomed to constructing models so that a steady state emerges. Perhaps because of that, the fact that we need knife-edge restrictions to obtain a steady state may have been overlooked. What our analysis shows is that even with a slightest bit of capital-augmenting technological progress, \ie, if we remove knife-edge restrictions, a steady-state equilibrium will no longer exist, generating unbalanced growth and asset price bubbles. 

Second, in sharp contrast with the common view that the land price should reflect its fundamental value along the long-run trend, the current model as well as those of \citet{HiranoTodaUnbalanced, HiranoTodaHousingbubble} show that asset price bubbles occur in the context of historical trends involving shifts in industrial structure, including the transition from the Malthusian economy to the modern economy via Industrial Revolution. This implies that the development of capitalist economies and asset price bubbles are tightly connected. Instead of pure bubbles, considering bubbles attached to real assets allows us to get to the essence of asset price bubbles from a historical perspective.

\section{Concluding remarks and open issues}

The seminal paper by \citet{Samuelson1958} has produced a vast literature on rational bubbles. This literature has made a significant progress with the recent development on financial factors and asset bubbles since the 2008 financial crisis. While the existing pure bubble models are very useful for understanding money or monetary economies, there have been significant limitations in using those models for analyzing bubbles attached to real assets such as stocks, housing, and land. As we have explained in \S\ref{subsec:topic_criticism}, in addition to the criticism that pure bubble models are unrealistic for describing bubbles attached to real assets, there are several other criticisms, which make applications and further development of the literature challenging.

To the best of our knowledge, the example in \citet[\S7]{Wilson1981} is the first one to show bubbles attached to dividend-paying assets as well as the nonexistence of fundamental equilibria. Although \citet{Wilson1981} provided this surprising result, he only gave an example in a fairly limited setting. Hence, it is not obvious to what extent there is generality and how relevant the result is, nor is it obvious what insightful implications can be drawn from a macroeconomic perspective. 

A series of our working papers \citep*{HiranoJinnaiTodaLeverage,HiranoTodaNecessity,HiranoTodaUnbalanced, HiranoTodaHousingbubble} have generalized \citet{Wilson1981}'s insight and uncovered the essence of asset price bubbles from a macroeconomic perspective. They have derived three new insights. First, they propose a conceptually new idea of the necessity of asset price bubbles and establish the Bubble Necessity Theorem within workhorse macroeconomic models (see \S\ref{subsec:found_necessary} for the concept of the necessity of asset bubbles). Second, they show the tight link between unbalanced growth and asset price bubbles, and find that asset pricing implications under balanced growth and unbalanced growth are markedly different. This implies that the essence of asset price bubbles attached to dividend-paying assets is nonstationarity. Third, they show that asset price bubbles occur within larger historical trends involving shifts in industrial structure that cause unbalanced growth. 

These findings have two important implications. First, in benchmark modern macro-finance models, it seems that there is a presupposition that asset prices \emph{must} reflect fundamentals. Indeed, it is fair to say that those models are constructed so that asset prices equal the fundamental values. Contrary to this view, we find that asset prices \emph{cannot} equal fundamental values. In other words, there are benchmark cases in which the notion that asset prices should basically reflect fundamentals is wrong. Second, bubbles attached to dividend-paying assets cannot arise in stationary models. To understand the essence of asset price bubbles, we need to depart from stationary models with a steady state to nonstationary models without it. Since economists have long been trained and accustomed to studying stationary models, if a steady state equilibrium does not exist, they tend to think that the model is broken or is a failure, and they tend to make assumptions or change the model setting so that a steady state is produced. We need to change this conventional way of thinking. The obsession with the steady-state model may have led to the common belief that bubbles do not occur in standard macroeconomic models or there is a fundamental difficulty in generating asset bubbles in dividend-paying assets.

Finally, we would like to discuss future directions. The theory of rational asset price bubbles attached to real assets has the potential to fundamentally change the conventional thinking about asset bubbles. Moreover, it remains largely unexplored. Here, we would like to raise some directions as examples. First, understanding why stock, land, and housing price bubbles occur in the context of history will be of great significance not only in terms of economic theory but also in terms of economic history. Tracing the origins of asset price bubbles suggests that they are closely related to the origin and development of capitalist economies. This direction could create a new research field connecting historical research and asset price bubbles. 

Second, as we mentioned in \S\ref{sub:tech progress}, by imposing knife-edge restrictions, standard macroeconomic models are constructed so that balanced growth, or a steady state, arises. If we extend those models to allow unbalanced growth dynamics by removing these restrictions, we expect that asset pricing implications will be markedly different, allowing for macro-finance analysis with a focus on asset price bubbles.

Third, asset price bubbles are also directly related to policy research. As the bare-bones model in \S\ref{sec:bare-bones} illustrates, the land price-rent ratio tells us important information about whether land prices contain a bubble ore not. Connecting this result to the bubble detecting literature in econometrics would lead to the identification of bubbles and the construction of early warning indicators of bubbles from both econometric and macroeconomic perspectives. Also, in the bare-bones model, we abstract from many realistic elements such as wage rigidity, price stickiness, defaults, the zero lower bound constraint, and further heterogeneity among economic agents. Hence, the model can be fleshed out by introducing those realistic elements and extended in various ways toward policy and quantitatively oriented analyses such as heterogeneous-agent New Keynesian models. 

Thus, bubble research advanced along the lines of \citet{Wilson1981} constitutes an untapped field with enormous potential for applications, just as \citet{Samuelson1958}'s paper has produced a vast literature on monetary economics. We hope that our review article will lead to fruitful outcomes.

\appendix

\section{Proof of \S\ref{sec:bare-bones} results}

\begin{proof}[Proof of Proposition \ref{prop:steady}]
Let $R_k\coloneqq A+1-\delta$ be the gross return on capital. By \eqref{eq:P_dynamics}, $(P,\phi)$ is part of a steady state if and only if
\begin{equation}
    P=\frac{\beta(1-\phi)}{1-\beta+\beta\phi(1-R_k)}D>0 \label{eq:P_steady}
\end{equation}
and \eqref{eq:arbitrage} holds. Using \eqref{eq:P_steady}, the gross risk-free rate \eqref{eq:Rt} becomes
\begin{equation}
    R=\frac{1-\phi\beta R_k}{\beta(1-\phi)}. \label{eq:R_steady}
\end{equation}
Using \eqref{eq:R_steady}, we obtain
\begin{equation}
    R\le R_k\iff R_k\ge 1/\beta\iff A\ge \frac{1-\beta}{\beta}+\delta=\ubar{A}. \label{eq:R_equiv}
\end{equation}
If $A<\ubar{A}$, then $R_k<R$ by \eqref{eq:R_equiv} and hence $\phi=0$. Then $R=1/\beta$ by \eqref{eq:R_steady}. If $A=\ubar{A}$, then $R=R_k=1/\beta$ by \eqref{eq:R_equiv}. In either case, we have $R=1/\beta$ and $P=\frac{\beta D}{1-\beta}=\frac{D}{R-1}$ by \eqref{eq:P_steady}, so the steady state interest rate and land price are unique and statement \ref{item:steady_L} holds.

In what follows, assume $A>\ubar{A}$. Then \eqref{eq:R_equiv} implies $R_k>R$ and hence $\phi=\pi$ by \eqref{eq:arbitrage}. By \eqref{eq:P_steady}, a (necessarily unique) steady state exists if and only if
\begin{equation*}
    1-\beta+\beta\pi(1-R_k)>0\iff A<\frac{1-\beta}{\beta\pi}+\delta=\bar{A}.
\end{equation*}
Setting $\phi=\pi$ in \eqref{eq:R_steady}, we obtain \eqref{eq:Rf}. Furthermore, \eqref{eq:P_steady} implies $P=\frac{D}{R-1}$. Therefore both statements \ref{item:steady_M} and \ref{item:steady_H} hold.
\end{proof}

\begin{proof}[Proof of Lemma \ref{lem:invest}]
Suppose to the contrary that $\phi_t=0$ for some $t$. Then \eqref{eq:P_dynamics} implies
\begin{equation*}
    P_t\ge \frac{\beta}{1-\beta}D\iff \frac{D}{P_t}\le \frac{1-\beta}{\beta}.
\end{equation*}
Therefore \eqref{eq:Rt} implies
\begin{equation*}
    R_t=\frac{1}{1-\beta+\beta\phi_{t+1}}\frac{D}{P_t}\le \frac{1}{1-\beta}\frac{1-\beta}{\beta}=\frac{1}{\beta}<A+1-\delta
\end{equation*}
because $A>\ubar{A}$. Then \eqref{eq:arbitrage} implies $\phi_t=\pi$, which is a contradiction.
\end{proof}

To prove Proposition \ref{prop:exist}, we establish a series of lemmas.

\begin{lem}\label{lem:W0}
If initial aggregate wealth satisfies
\begin{equation}
    W_0\ge \bar{W}_0\coloneqq \frac{\beta(1-\pi)DX}{(1-\beta)(A+1-\delta)}, \label{eq:W0bar}
\end{equation}
then $\phi_t=\pi$ for all $t$ is an equilibrium path.
\end{lem}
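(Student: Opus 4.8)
The plan is to take the conjectured path $\phi_t=\pi$ for all $t$, compute the implied aggregate dynamics, and check that it satisfies every equilibrium requirement. Since the log-utility consumption rule $c_{it}=(1-\beta)w_{it}$ and land-market clearing are already built in, the only thing left to verify is the arbitrage condition \eqref{eq:arbitrage}; because $\phi_t=\pi>0$, that reduces to $R_t\le A+1-\delta$ for every $t$. With $\phi_t=\pi$ throughout, \eqref{eq:phit} gives $P_tX=\beta(1-\pi)W_t$ (so $P_0$ is pinned down by $W_0$), and by \eqref{eq:P_dynamics} with $\phi_t=\phi_{t-1}=\pi$ the land price obeys the affine recursion $P_t=\rho P_{t-1}+c$ for $t\ge1$, where $\rho=\frac{\beta\pi(A+1-\delta)}{1-\beta+\beta\pi}$ and $c=\frac{\beta(1-\pi)}{1-\beta+\beta\pi}D>0$. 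Plugging this recursion into \eqref{eq:riskfree} yields the clean formula $R_t=\rho+\frac{D}{(1-\beta+\beta\pi)P_t}$, which is strictly decreasing in $P_t$; since $\rho<A+1-\delta$, the requirement $R_t\le A+1-\delta$ is equivalent to the lower bound $P_t\ge\bar{P}\coloneqq\frac{D}{(1-\beta)(A+1-\delta)}$. Thus the lemma reduces to showing $P_t\ge\bar{P}$ for all $t$.

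The hypothesis $W_0\ge\bar{W}_0$ supplies the base case $P_0\ge\bar{P}$ via $P_0X=\beta(1-\pi)W_0$. For $t\ge1$ I would iterate $P_t=\rho P_{t-1}+c$, splitting on whether $\rho\ge1$ or $\rho<1$ (equivalently, $A\ge\bar{A}$ or $\ubar{A}<A<\bar{A}$). If $\rho\ge1$, then $P_t-P_{t-1}=(\rho-1)P_{t-1}+c>0$, so $\set{P_t}$ is strictly increasing and $P_t\ge P_0\ge\bar{P}$. If $\rho<1$, then $\set{P_t}$ converges monotonically to the fixed point $P^\ast=\frac{c}{1-\rho}=\frac{\beta(1-\pi)D}{1-\beta-\beta\pi(A-\delta)}$, so $P_t\ge\min\set{P_0,P^\ast}$, and it only remains to check $P^\ast\ge\bar{P}$. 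This reduces to the inequality $\beta(1-\pi)(1-\beta)(A+1-\delta)\ge1-\beta-\beta\pi(A-\delta)$; both sides are affine in $A$, they agree at $A=\ubar{A}$, and the left-hand side has the larger slope, so it holds throughout $(\ubar{A},\bar{A})$. In either case $P_t\ge\bar{P}$ for all $t$, hence $R_t\le A+1-\delta$ for all $t$, so $\phi_t=\pi$ is consistent with \eqref{eq:arbitrage} and the conjectured path is indeed an equilibrium.

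The step that takes some care is the $\rho<1$ branch: even though the land price may have to drift down toward its fundamental steady state, one must confirm that it never falls below the no-arbitrage threshold $\bar{P}$, which is exactly the content of the inequality $\beta(1-\pi)(1-\beta)(A+1-\delta)\ge1-\beta-\beta\pi(A-\delta)$ on $(\ubar{A},\bar{A})$. Everything else is routine bookkeeping with the linear recursions \eqref{eq:P_dynamics} and \eqref{eq:riskfree}.
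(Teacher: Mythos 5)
Your overall strategy is the same as the paper's: posit $\phi_t=\pi$ for all $t$, reduce the arbitrage condition \eqref{eq:arbitrage} to the single inequality $R_t\le A+1-\delta$, translate that into a lower bound on the land price, and show the bound is preserved along the linear recursion \eqref{eq:P_dynamics2}. Your induction step is fine (the case split on $\rho\ge 1$ versus $\rho<1$ together with the check $P^\ast\ge\bar P$ on $(\ubar{A},\bar{A})$ is, if anything, more transparent than the paper's invariance argument via $f(\bar W_0)\ge\bar W_0$). The genuine gap is the base case. From $P_0X=\beta(1-\pi)W_0$ and $W_0\ge\bar W_0$ you obtain only
\[
P_0\ \ge\ \frac{\beta^2(1-\pi)^2D}{(1-\beta)(A+1-\delta)}\ =\ \beta^2(1-\pi)^2\,\bar P\ <\ \bar P,
\]
so the asserted implication ``$W_0\ge\bar W_0$ supplies $P_0\ge\bar P$'' is false as written, and for $W_0$ only slightly above $\bar W_0$ the required inequality $R_0\le A+1-\delta$ can fail already at $t=0$.

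The root cause is not your reduction---your formula $R_t=\rho+\frac{D}{(1-\beta+\beta\pi)P_t}$ and the threshold $\bar P=\frac{D}{(1-\beta)(A+1-\delta)}$ are correct---but the constant in the lemma itself. In terms of wealth, $R_t\le A+1-\delta$ is equivalent to $W_t\ge \frac{DX}{\beta(1-\pi)(1-\beta)(A+1-\delta)}=\bar W_0/\bigl(\beta^2(1-\pi)^2\bigr)$, not to $W_t\ge\bar W_0$: the paper's own proof arrives at $\bar W_0$ by substituting $D/P_t=\beta(1-\pi)DX/W_t$, whereas $P_tX=\beta(1-\pi)W_t$ actually gives $D/P_t=DX/(\beta(1-\pi)W_t)$. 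To make your argument (and the lemma) airtight, strengthen the hypothesis to $W_0\ge \frac{DX}{\beta(1-\pi)(1-\beta)(A+1-\delta)}$, or equivalently state it directly as $P_0\ge\bar P$; nothing downstream changes, since Proposition \ref{prop:exist} only needs \emph{some} wealth threshold above which full capital investment is self-sustaining. Had you carried out the one-line computation behind your base-case assertion instead of asserting it, your otherwise correct derivation would have exposed this discrepancy.
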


\begin{proof}
Set $\phi_t=\pi$ for all $t$. Using \eqref{eq:phit}, the gross risk-free rate \eqref{eq:Rt} becomes
\begin{equation*}
    R_t=\frac{\beta\pi(A+1-\delta)}{1-\beta+\beta\pi}+\frac{\beta(1-\pi)}{1-\beta+\beta\pi}\frac{DX}{W_t}\le A+1-\delta,
\end{equation*}
where the last inequality follows from \eqref{eq:arbitrage}. Solving the inequality, we obtain
\begin{equation*}
    W_t\ge \frac{\beta(1-\pi)DX}{(1-\beta)(A+1-\delta)}=\bar{W}_0.
\end{equation*}
To complete the proof, it remains to show that if $W_0\ge \bar{W}_0$, then $W_t\ge \bar{W}_0$ for all $t$ if $\set{W_t}$ satisfies \eqref{eq:W_dynamics} with $\phi_t=\pi$ for all $t$. Write \eqref{eq:W_dynamics} as
\begin{equation*}
    W_t=f(W_{t-1})\coloneqq \frac{\beta\pi(A+1-\delta)}{1-\beta+\beta\pi}W_{t-1}+\frac{DX}{1-\beta+\beta\pi}.
\end{equation*}
Since $f$ is increasing, it suffices to show $f(\bar{W}_0)\ge \bar{W}_0$. Using $A>\ubar{A}$ and hence $\beta(A+1-\delta)>1$, it follows from \eqref{eq:W0bar} that
\begin{align*}
     f(\bar{W}_0)-\bar{W}_0&\ge \frac{DX}{1-\beta+\beta\pi}\left(\frac{\beta^2\pi(1-\pi)}{1-\beta}+1\right)-\frac{\beta^2(1-\pi)DX}{1-\beta}\\
     &=\frac{DX}{1-\beta+\beta\pi}(1-\beta^2(1-\pi)^2)>0. \qedhere
\end{align*}
\end{proof}

\begin{lem}\label{lem:global}
In the equilibrium in Lemma \ref{lem:W0}, the price dynamics \eqref{eq:P_dynamics2} holds. Furthermore, the statements \ref{item:AL} and \ref{item:AH} in Proposition \ref{prop:exist} hold.
\end{lem}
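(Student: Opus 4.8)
The plan is to reduce the entire statement to the analysis of a single scalar first-order linear difference equation, from which the limiting behavior of both $\set{P_t}$ and $\set{R_t}$ can be read off directly.

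First I would substitute $\phi_t=\phi_{t-1}=\pi$ into the general land price recursion \eqref{eq:P_dynamics}. Evaluated at $\phi_t=\phi_{t-1}=\pi$, the factor $(1-\phi_t)\frac{\phi_{t-1}}{1-\phi_{t-1}}$ multiplying $P_{t-1}$ collapses to $\pi$, so \eqref{eq:P_dynamics} becomes exactly \eqref{eq:P_dynamics2}, that is, $P_t=\rho P_{t-1}+c$ with $\rho\coloneqq\frac{\beta\pi(A+1-\delta)}{1-\beta+\beta\pi}>0$ and $c\coloneqq\frac{\beta(1-\pi)}{1-\beta+\beta\pi}D>0$. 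This proves the first assertion of the lemma and leaves only the asymptotics. Since $\rho<1\iff\beta\pi(A-\delta)<1-\beta\iff A<\frac{1-\beta}{\beta\pi}+\delta=\bar A$, the dichotomy $A<\bar A$ versus $A\ge\bar A$ coincides with $\rho<1$ versus $\rho\ge 1$.

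For $A<\bar A$ (so $0<\rho<1$) the recursion has the unique fixed point $P^\ast\coloneqq c/(1-\rho)$, and the closed form $P_t=\rho^t(P_0-P^\ast)+P^\ast$ shows $P_t\to P^\ast$. A one-line simplification gives $1-\rho=\frac{1-\beta-\beta\pi(A-\delta)}{1-\beta+\beta\pi}$, hence $P^\ast=\frac{\beta(1-\pi)}{1-\beta-\beta\pi(A-\delta)}D$, which is the quantity $P$ in \eqref{eq:Pf}; since that display also records $P=D/(R_f-1)$, we get $R_t=\frac{P_{t+1}+D}{P_t}\to 1+D/P=R_f$ in \eqref{eq:Rf}. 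This is statement \ref{item:AL}.

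For $A\ge\bar A$ I would split off the knife-edge. If $A>\bar A$ then $\rho>1$, so $1-\rho<0$ and thus $P^\ast=c/(1-\rho)<0<P_0$ (land prices are positive in equilibrium); therefore $P_t=\rho^t(P_0-P^\ast)+P^\ast\to+\infty$, and $R_t=\frac{\rho P_t+c+D}{P_t}=\rho+\frac{c+D}{P_t}\to\rho=R_b$ in \eqref{eq:Rb}. If $A=\bar A$ then $\rho=1$ and the geometric formula degenerates to $P_t=P_0+tc\to+\infty$, with $R_t=1+\frac{c+D}{P_t}\to 1=R_b$. In both sub-cases $P_t\to\infty$ and $R_t\to R_b=\rho\ge 1$, which is statement \ref{item:AH}. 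None of the steps is genuinely hard; the only points needing care are the algebraic identity $c/(1-\rho)=P$ in \eqref{eq:Pf}, the sign check $P^\ast<0$ when $\rho>1$ (which is what forces divergence to $+\infty$ rather than $-\infty$), and handling $\rho=1$ separately since the geometric solution formula is invalid there.
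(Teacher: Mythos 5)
Your proposal is correct and follows essentially the same route as the paper: substitute $\phi_t=\pi$ into \eqref{eq:P_dynamics} to obtain the linear recursion $P_t=\rho P_{t-1}+\gamma$, then read off convergence to $\gamma/(1-\rho)$ when $\rho<1$ and divergence with $R_t\to\rho$ when $\rho\ge 1$. Your explicit sign check on the fixed point and separate treatment of the knife-edge case $\rho=1$ merely spell out details the paper leaves implicit.
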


\begin{proof}
Consider an equilibrium described in Lemma \ref{lem:W0}. Setting $\phi_t=\pi$ in \eqref{eq:P_dynamics}, we obtain the price dynamics \eqref{eq:P_dynamics2}, which we write as $P_t=\rho P_{t-1}+\gamma$ for
\begin{equation}
    (\rho,\gamma)=\left(\frac{\beta\pi(A+1-\delta)}{1-\beta+\beta\pi},\frac{\beta(1-\pi) D}{1-\pi+\beta\pi}\right). \label{eq:alphagamma}
\end{equation}

\ref{item:AL} If $A<\bar{A}$, then $\rho\in (0,1)$. Then $\set{P_t}$ satisfying \eqref{eq:P_dynamics2} converges to $P\coloneqq \gamma/(1-\rho)$, which is \eqref{eq:Pf}. Then $R_t$ converges to
\begin{equation*}
    R=\frac{P_f+D}{P_f}=\frac{1-\beta\pi(A+1-\delta)}{\beta(1-\pi)},
\end{equation*}
which is \eqref{eq:Rf}.

\ref{item:AH} If $A\ge \bar{A}$, then $\rho\ge 1$. Then $\set{P_t}$ satisfying \eqref{eq:P_dynamics2} diverges to $\infty$. Using \eqref{eq:Rt}, $R_t$ converges to $R_b$ in \eqref{eq:Rb}.
\end{proof}

\begin{proof}[Proof of Proposition \ref{prop:exist}]
By shifting time if necessary, we seek an equilibrium in which $\phi_t<\pi$ for all $t<0$ and $\phi_t=\pi$ for all $t\ge 0$. Given $W_0\ge \bar{W}_0$, for $t\ge 0$, we may construct an equilibrium path using Lemmas \ref{lem:W0} and \ref{lem:global} and the conclusion holds. Therefore to complete the proof, it suffices to extend $\set{W_t}$ for $t<0$ consistent with the initial aggregate capital. To construct an equilibrium path including $t<0$, fix $W_0\ge \bar{W}_0$. We only provide a sketch because the complete proof is tedious.

We first construct $\set{W_t}$. By Lemma \ref{lem:invest}, we have $\phi_t\in (0,\pi)$ for $t<0$, so \eqref{eq:arbitrage} implies $R_t=R_k\coloneqq A+1-\delta$. Since the risk-free rate equals the return on capital, it follows from \eqref{eq:Ct} that $W_{t+1}=\beta R_k W_t$ for $t<0$. Hence given $W_0$, we may extend $\set{W_t}$ for $t=-j<0$, namely $W_{-j}=(\beta R_k)^{-j}W_0$.

We next construct $\set{\phi_t}$. Noting that $R_t=R_k$ for $t<0$, we obtain
\begin{equation*}
    R_k=R_t=\frac{P_{t+1}+D}{P_t}\iff P_tX=\frac{1}{R_k}(P_{t+1}X+DX).
\end{equation*}
Dividing both sides by $\beta W_t=W_{t+1}/R_k$ and using \eqref{eq:phit}, we obtain
\begin{equation*}
    1-\phi_t=\beta(1-\phi_{t+1})+\frac{DX}{W_{t+1}}.
\end{equation*}
Because $W_t$ for $t\le 0$ is already constructed and $\phi_0=\pi$, we may recursively determine $\phi_t$ for $t=-j<0$. Specifically, we obtain
\begin{equation}
    1-\phi_{-j}=\beta^j(1-\pi)+\beta^{j-1}\frac{DX}{W_0}(1+R_k+\dots+R_k^{j-1}). \label{eq:phi-j}
\end{equation}

Finally we derive the equilibrium condition. If the equilibrium dynamics starts at $t=-j$ with initial aggregate wealth $K\ge 0$ given, using \eqref{eq:Wt}, the initial wealth is
\begin{equation*}
    W_{-j}=R_kK+(P_{-j}+D)X=R_kK+\beta(1-\phi_{-j})W_{-j}+DX.
\end{equation*}
Using $W_{-j}=(\beta R_k)^{-j}W_0$ and \eqref{eq:phi-j}, we may simplify the equation as
\begin{equation*}
    (1-\beta^{j+1}(1-\pi))W_0=\beta^j(R_k^{j+1}K+DX(1+R_k+\dots+R_k^j)).
\end{equation*}
Imposing $W_0/(\beta R_k)=W_{-1}<\bar{W}_0\le W_0$ and solving for $j$, we obtain an equilibrium.
\end{proof}

\begin{proof}[Proof of Proposition \ref{prop:bubble_char}]
If $A<\bar{A}$, then $\rho<1$ in \eqref{eq:P_dynamics2} and $\set{P_t}$ converges. Because the dividend yield $D/P_t$ converges to a positive constant, we have $\sum_{t=1}^\infty D/P_t=\infty$, so there is no bubble by Lemma \ref{lem:bubble}.

If $A=\bar{A}$, then $\rho=1$ in \eqref{eq:P_dynamics2} and hence $P_t=P_0+\gamma t$. Since $P_t$ grows linearly, we have $\sum_{t=1}^\infty D/P_t=\infty$, so there is no bubble by Lemma \ref{lem:bubble}.

If $A>\bar{A}$, then $P_t$ asymptotically grows at rate $\rho>1$. Since $D/P_t\sim \rho^{-t}$, we have $\sum_{t=1}^\infty D/P_t<\infty$, so there is a bubble by Lemma \ref{lem:bubble}.
\end{proof}

\printbibliography

\section{Figures (not for publication)}

\begin{figure}[!htb]
\centering
\begin{subfigure}{0.48\linewidth}
    \includegraphics[width=\linewidth]{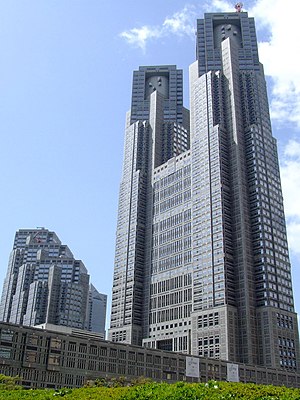}
    \caption{Metropolitan Government Building.}
\end{subfigure}
\begin{subfigure}{0.48\linewidth}
    \includegraphics[width=\linewidth]{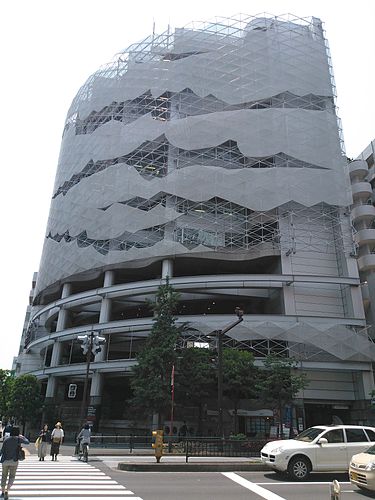}
    \caption{Joule A.}
\end{subfigure}
\caption{Towers of Bubble (Images from Wikipedia).}
\label{fig:tower}
\end{figure}

\end{document}